\DeclarePairedDelimiter{\floor}{\lfloor}{\rfloor}
\DeclarePairedDelimiter{\ceil}{\lceil}{\rceil}
\newcommand{\divides}{\mid}
\newcommand{\notdivides}{\nmid}
\newtheoremstyle{dotless}{}{}{\itshape}{}{\bfseries}{}{ }{}
\begin{document}
\title{A Class of MSR Codes \\ for Clustered Distributed Storage} 


\author{%
  \IEEEauthorblockN{Jy-yong Sohn, Beongjun Choi and Jaekyun Moon}
  \IEEEauthorblockA{KAIST\\
                    School of Electrical Engineering\\
                    Email: \{jysohn1108, bbzang10\}@kaist.ac.kr, jmoon@kaist.edu}
}


\maketitle

\begin{abstract}
Clustered distributed storage models real data centers where intra- and cross-cluster repair bandwidths are different.
In this paper, exact-repair minimum-storage-regenerating (MSR) codes achieving capacity of clustered distributed storage are designed. 
Focus is given on two cases: $\epsilon=0$ and $\epsilon=1/(n-k)$, where $\epsilon$ is the ratio of the available cross- and intra-cluster repair bandwidths, $n$ is the total number of distributed nodes and $k$ is the number of contact nodes in data retrieval. The former represents the scenario where cross-cluster communication is not allowed, while the latter corresponds to the case of minimum cross-cluster bandwidth that is possible under the minimum storage overhead constraint. 
For the $\epsilon=0$ case, two types of locally repairable codes are proven to achieve the MSR point. As for $\epsilon=1/(n-k)$, an explicit MSR coding scheme is suggested for the two-cluster situation under the specific of condition of $n=2k$.

\end{abstract}


\section{Introduction}

Distributed Storage Systems (DSSs) have been deployed by various enterprises to reliably store massive amounts of data under the frequent storage node failure events. A failed node is regenerated (repaired) by collecting information from other survived nodes with the regeneration process guided by a pre-defined network coding scheme. Under this setting, Dimakis \textit{et al.} \cite{dimakis2010network} obtained the expression for the maximum reliably storable file size, denoted as \textit{capacity} $\mathcal{C}(\alpha, \gamma)$, as a function of given system parameters: the node capacity $\alpha$ and the bandwidth $\gamma$ required for repairing a failed node. 
The capacity analysis in \cite{dimakis2010network} underscores the following key messages. First, there exists a network coding scheme which utilizes the $(\alpha,\gamma)$ resources and enables a reliable storage of a file of size $\mathcal{C}(\alpha, \gamma)$. Second, it is not feasible to find a network coding scheme which can reliably store a file larger than $\mathcal{C}(\alpha, \gamma)$, given the available resources of $(\alpha, \gamma)$. In subsequent research efforts, the authors of \cite{rashmi2009explicit, cadambe2013asymptotic, ernvall2014codes} proposed explicit network coding schemes which achieve the capacity of DSSs. These coding schemes are optimal in the sense of efficiently utilizing $(\alpha, \gamma)$ resources for maintaining the reliable storage systems.

Focus on the clustered nature of distributed storage has been a recent research direction taken by several researchers \cite{sohn2016capacity, sohn2017TIT, prakash2017storage, hu2017optimal}.
According to these recent papers, storage nodes dispersed into multiple \textit{racks} in real data centers are 
seen as forming \textit{clusters}. 
In particular, the authors of the present paper proposed a system model for clustered DSSs in \cite{sohn2016capacity} that reflects the difference between intra- and cross-cluster bandwidths. 
In the system model of \cite{sohn2016capacity}, the file to be stored is coded and distributed into $n$ storage nodes, which are evenly dispersed into $L$ clusters. Each node has storage capacity of $\alpha$, and the data collector contacts arbitrary $k$ out of $n$ existing nodes to retrieve the file. 
Since nodes are dispersed into multiple clusters, the regeneration process involves utilization of both intra- and cross-cluster repair bandwidths, denoted by $\beta_I$ and $\beta_c$, respectively.
In this proposed system model, the authors of \cite{sohn2016capacity} obtained the closed-form expression for the maximum reliably storable file size, or \textit{capacity} $\mathcal{C}(\alpha, \beta_I, \beta_c)$, of the clustered DSS. Furthermore, it has been shown that network coding exists that can achieve the capacity of clustered DSSs. However, explicit constructions of capacity-achieving network coding schemes for clustered DSSs have yet to be found. 

This paper proposes a network coding scheme which achieves capacity of the clustered DSS, with a minimum required node storage overhead. In other words, the suggested code is shown to be a minimum-storage-regenerating (MSR) code of the clustered DSS. 
This paper focuses on two important cases of $\epsilon=0$ and $\epsilon=1/(n-k)$, where $\epsilon\coloneqq\beta_c/\beta_I$ represents the ratio of cross- to intra-cluster repair bandwidths. The former represents the system where cross-cluster communication is not possible. The latter corresponds to the minimum $\epsilon$ value that can achieve the minimum storage overhead of $\alpha = \mathcal{M}/k$, where $\mathcal{M}$ is the file size.
When $\epsilon=0$, it is shown that appropriate application of locally repairable codes suggested in \cite{papailiopoulos2014locally,tamo2016optimal} achieves the MSR point for general $n,k,L$ settings with the application rule depending on the parameter setting. 
For the $\epsilon=1/(n-k)$ case, an explicit coding scheme is suggested which is proven to be an MSR code under the conditions of $L=2$ and $n=2k$.
There have been some previous works \cite{tebbi2014code, hu2017optimal, sahraei2017increasing, prakash2017storage} on code construction for DSS with clustered storage nodes, but to a limited extent. The works of \cite{hu2017optimal,tebbi2014code} suggested a coding scheme which can reduce the cross-cluster repair bandwidth, but these schemes are not proven to be an MSR code that achieves capacity of clustered DSSs with minimum storage overhead. 
The authors of \cite{sahraei2017increasing} provided an explicit coding scheme which reduces the repair bandwidth of a clustered DSS 
under the condition that each failed node can be exactly regenerated by contacting any one of other clusters. 
However, the approach of \cite{sahraei2017increasing} is different from that of the present paper in the sense that it does not consider the scenario with unequal intra- and cross-cluster repair bandwidths.
Moreover, the coding scheme proposed in \cite{sahraei2017increasing} is shown to be a minimum-bandwidth-regenerating (MBR) code for some limited parameter setting, while the present paper deals with an MSR code.
An MSR code for clustered DSSs has been suggested in \cite{prakash2017storage}, but this paper has the data retrieval condition different from the present paper. The authors of \cite{prakash2017storage} considered the scenario where data can be collected by contacting arbitrary $k$ out of $n$ clusters, while data can be retrieved by contacting arbitrary $k$ out of $n$ nodes in the present paper. Thus, the two models have the identical condition only when each cluster has one node. The difference in data retrieval conditions results in different capacity values and different MSR points. In short, the code in \cite{prakash2017storage} and the code in this paper achieves different MSR points.

\newtheorem{theorem}{Theorem}
\newtheorem{lemma}{Lemma}
\newtheorem{corollary}{Corollary}
\newtheorem{definition}{Definition}
\newtheorem{prop}{Proposition}
\newtheorem{construction}{Construction}
\newtheorem{remark}{Remark}

\theoremstyle{dotless}
\newtheorem{thm}{Theorem}
\newtheorem{cor}{Corollary}

\section{Backgrounds and Notations}




A given file of $\mathcal{M}$ symbols is encoded and distributed into $n$ nodes, each of which has node capacity $\alpha$. The storage nodes are evenly distributed into $L \geq 2$ clusters, so that each cluster contains $n_I\coloneqq n/L$ nodes. A failed node is regenerated by
 obtaining information from other survived nodes: $n_I-1$ nodes in the same cluster help by sending $\beta_I$ each, while $n-n_I$ nodes in other clusters help by sending $\beta_c$ each.
 Thus, repairing each node requires the overall repair bandwidth of
 \begin{equation} \label{Eqn:gamma}
 \gamma = (n_I-1)\beta_I + (n-n_I)\beta_c.
 \end{equation}
 A data collector (DC) retrieves the original file $\mathcal{M}$ by contacting arbitrary $k$ out of $n$ nodes - this property is called the maximum-distance-separable (MDS) property. 
The clustered distributed storage system 
with parameters $n,k,L$ is called an $[n,k,L]$-clustered DSS.
In an $[n,k,L]$-clustered DSS with given parameters of $\alpha, \beta_I, \beta_c$, \textit{capacity} $\mathcal{C}(\alpha, \gamma)$ is defined in \cite{sohn2016capacity} as the maximum data that can be reliably stored. The closed-form expression for $\mathcal{C}(\alpha, \gamma)$ is obtained in Theorem 1 of \cite{sohn2016capacity}. 
Aiming at reliably storing file $\mathcal{M}$, the set of $(\alpha, \gamma)$ pair values is said to be \textit{feasible} if $\mathcal{C}(\alpha, \gamma) \geq \mathcal{M}$ holds. According to Corollaries 1 and 2 of \cite{sohn2017TIT}, the set of feasible $(\alpha, \gamma)$ points shows the optimal trade-off relationship between $\alpha$ and $\gamma$, as illustrated in Fig. \ref{Fig:MBR_MSR_points}.
In the optimal trade-off curve, the point with minimum node capacity $\alpha$ is called the minimum-storage-regenerating (MSR) point. Explicit regenerating codes that achieve the MSR point are called the MSR codes. 
According to Theorem 3 of \cite{sohn2017TIT}, node capacity of the MSR point satisfies
\begin{align}
\alpha_\text{msr} &= \mathcal{M}/k \quad \quad \text{if  } \epsilon \geq \frac{1}{n-k}, \label{Eqn:alpha_MSR_large_epsilon_val} \\
\alpha_\text{msr} &> \mathcal{M}/k \quad \quad \text{if  } 0 \leq \epsilon < \frac{1}{n-k}. \label{Eqn:alpha_MSR_small_epsilon_val}
\end{align}
Note that $\alpha = \mathcal{M}/k$ is the minimum storage overhead to satisfy the MDS property, as stated in \cite{dimakis2010network}. Thus, $\epsilon=1/(n-k)$ is the scenario with minimum cross-cluster communication when the minimum storage overhead constraint $\alpha=\mathcal{M}/k$ is imposed.

Here we introduce some useful notations used in the paper. For a positive integer $n$, 
$[n]$ represents the set 
$\{1,2,\cdots, n\}$. 
For natural numbers $a$ and $b$, we use the notation $a \divides b$ if $a$ divides $b$. Similarly, write $a \notdivides b$ if $a$ does not divide $b$.
For given $k$ and $n_I$, 
we define
\begin{align}
q & \coloneqq \floor{\frac{k}{n_I}}, \label{Eqn:quotient} \\
m &\coloneqq mod(k,n_I) = k-qn_I. \label{Eqn:remainder}
\end{align}
For vectors we use bold-faced lower case letters. For a given vector 
$\mathbf{a}$, the transpose of $\mathbf{a}$ is denoted as $\mathbf{a}^T$. For natural numbers $m$ and $n\geq m$, the set $\{y_m, y_{m+1}, \cdots, y_n\}$ is represented as $\{y_i\}_{i=m}^{n}$. For a matrix $G$, the entry of $G$ at the $i^{th}$ row and $j^{th}$ column is denoted as $G_{i,j}$.
We also express the nodes in a clustered DSS using a two-dimensional representation: in the structure illustrated in Fig. \ref{Fig:2dim_representation}, $N(l,j)$ represents the node at the $l^{th}$ row and the $j^{th}$ column. 
Finally, we recall definitions on the locally repairable codes (LRCs) in \cite{papailiopoulos2014locally, tamo2016optimal}.  As defined in \cite{tamo2016optimal}, an $(n,k,r)-$LRC represents a code of length  $n$, which is encoded from $k$ information symbols. Every coded symbol of the $(n,k,r)-$LRC can be regenerated by accessing at most $r$ other symbols. As defined in \cite{papailiopoulos2014locally}, an $(n,r,d,\mathcal{M},\alpha)-$LRC takes a file of size $\mathcal{M}$ and encodes it into $n$ coded symbols, where each symbol is composed of $\alpha$ bits. Moreover, any coded symbol can be regenerated by contacting at most $r$ other symbols, and the code has the minimum distance of $d$.


\begin{figure}[t]
	\centering
	\includegraphics[height=35mm]{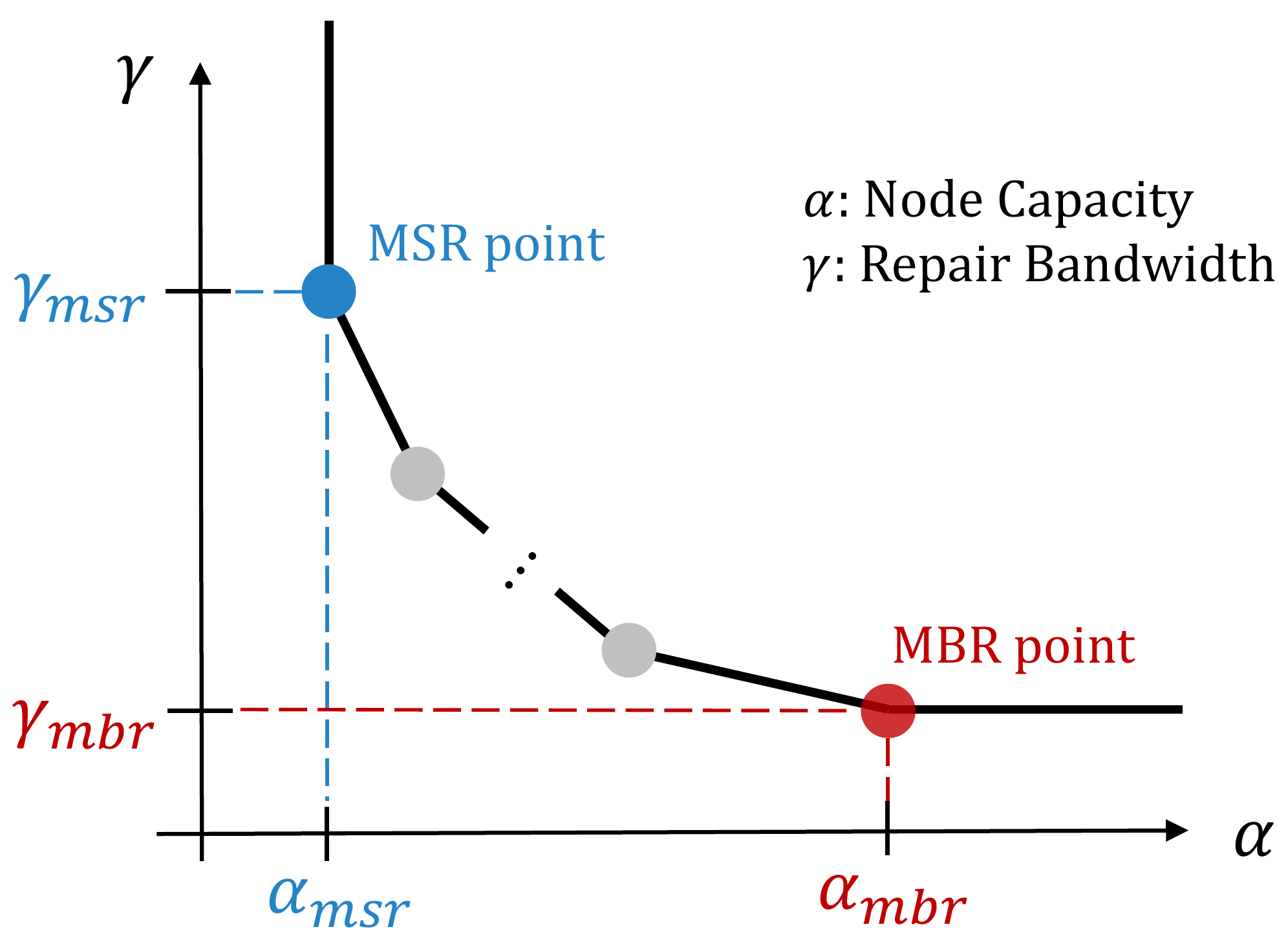}
	\caption{The optimal trade-off relationship between $\alpha$ and $\gamma$ in the clustered distributed storage modeled in \cite{sohn2017TIT}}
	\label{Fig:MBR_MSR_points}
\end{figure}

\begin{figure}[!t]
	\centering
	\includegraphics[height=22mm]{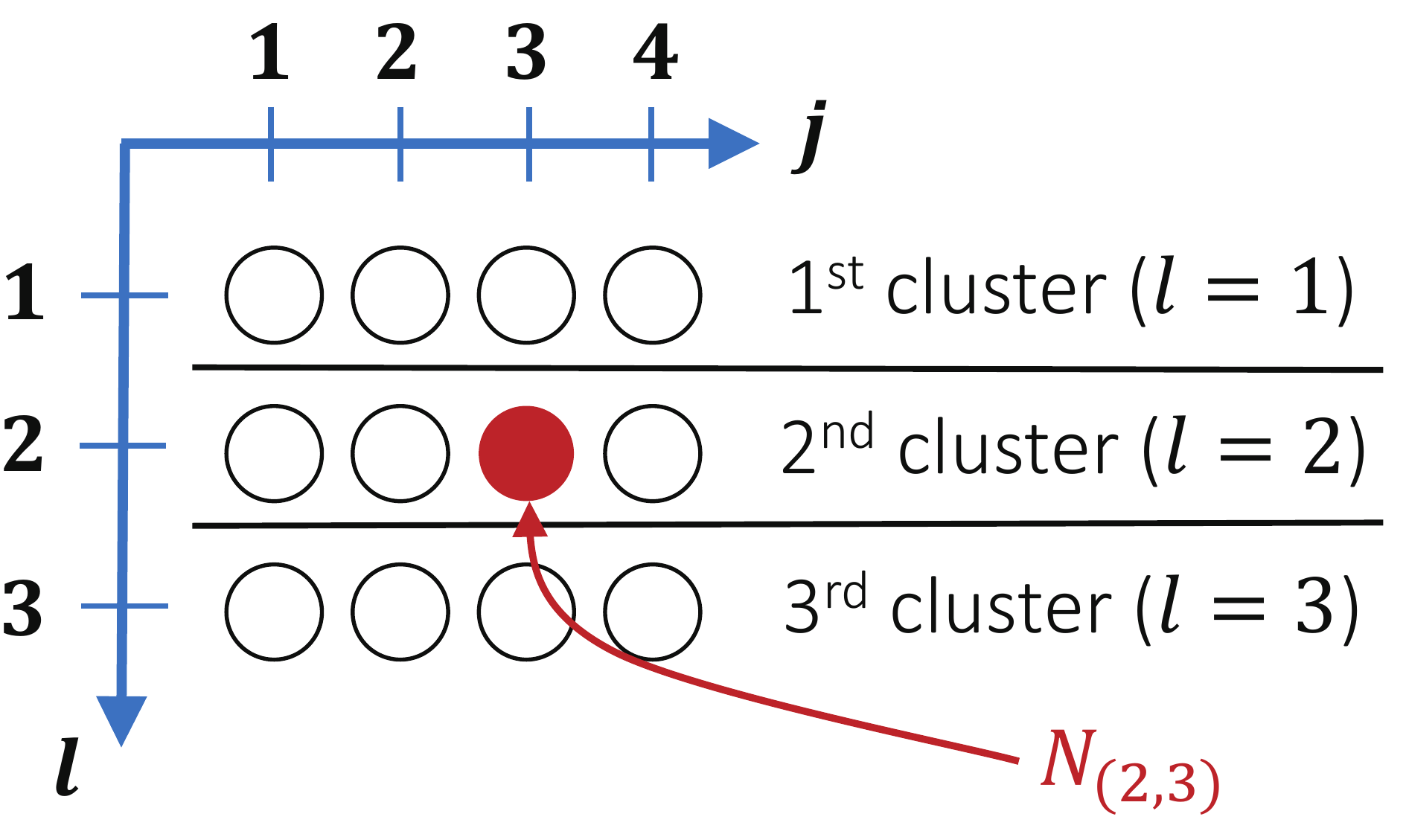}
	\caption{Two-dimensional representation of clustered distributed storage ($n=12, L=3, n_I = n/L = 4$)}
	\label{Fig:2dim_representation}
\end{figure}

\section{MSR Code Design for $\epsilon = 0$}

In this section, MSR codes for $\epsilon=0$ (\textit{i.e.}, $\beta_c = 0$) is designed. Under this setting, no cross-cluster communication is allowed in the node repair process.
First, the system parameters for the MSR point are examined. Second, two types of locally repairable codes (LRCs) suggested in \cite{papailiopoulos2014locally, tamo2016optimal} are proven to achieve the MSR point, under the settings of $n_I \divides k$ and $n_I \notdivides k$, respectively. 

\subsection{Parameter Setting for the MSR Point}

We consider the MSR point $(\alpha, \gamma)=(\alpha_\text{msr}, \gamma_\text{msr})$ which can reliably store file $\mathcal{M}$.
The following property specifies the system parameters for the $\epsilon=0$ case.

\begin{prop}\label{Prop:parameter_for_small_epsilon}
	Consider an [n,k,L] clustered DSS to reliably store file $\mathcal{M}$. The MSR point for $\epsilon=0$ is
	\begin{equation}\label{Eqn:parameters_for_small_epsilon}
	(\alpha_\text{msr}, \gamma_\text{msr}) = \left(\frac{\mathcal{M}}{k-q}, \frac{\mathcal{M}}{k-q} (n_I-1)\right),
	\end{equation}
	where $q$ is defined in (\ref{Eqn:quotient}).
	This point satisfies $\alpha=\beta_I$.
	\end{prop}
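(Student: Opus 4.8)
The plan is to work directly from the closed-form capacity expression of \cite{sohn2016capacity} (Theorem 1 there), specialized to $\epsilon=0$, i.e.\ $\beta_c=0$, so that $\gamma=(n_I-1)\beta_I$ by (\ref{Eqn:gamma}), and then to locate the minimum-$\alpha$ feasible point on the resulting trade-off, which by Corollaries 1--2 of \cite{sohn2017TIT} is the MSR point.

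First I would specialize the capacity formula. With $\beta_c=0$ the contribution of the $j$-th contacted node of a cluster is $\min\{\alpha,(n_I-j)\beta_I\}$, which is nonincreasing in $j$; hence the capacity-minimizing data-collection pattern loads clusters as unevenly as possible, namely it fills $q$ clusters with all $n_I$ of their nodes and places the remaining $m=k-qn_I$ nodes in one more cluster (with $q,m$ as in (\ref{Eqn:quotient})--(\ref{Eqn:remainder})). Summing the min-cut contributions over this pattern gives
\begin{equation*}
\mathcal{C}(\alpha,\gamma)\big|_{\epsilon=0} = q\sum_{i=0}^{n_I-1}\min\{\alpha,\, i\beta_I\} + \sum_{i=n_I-m}^{n_I-1}\min\{\alpha,\, i\beta_I\}.
\end{equation*}
The key structural point is that each of the $q$ fully contacted clusters contributes exactly one vanishing term (the $i=0$ term), so the number of possibly-positive terms is $q(n_I-1)+m$, and the arithmetic identity $q(n_I-1)+m = qn_I+m-q = k-q$ will be used repeatedly.

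Next I would extract both directions. Since every surviving term is at most $\alpha$, we get $\mathcal{C}|_{\epsilon=0}\le(k-q)\alpha$, so feasibility ($\mathcal{C}\ge\mathcal{M}$) forces $\alpha\ge\mathcal{M}/(k-q)$; this is the converse. For achievability I would substitute $\alpha=\beta_I$: then $\min\{\alpha,i\beta_I\}=\alpha$ for every $i\ge1$, the two sums collapse to $[q(n_I-1)+m]\alpha=(k-q)\alpha$, and taking $\alpha=\beta_I=\mathcal{M}/(k-q)$ yields $\mathcal{C}|_{\epsilon=0}=\mathcal{M}$ with $\gamma=(n_I-1)\beta_I=(n_I-1)\mathcal{M}/(k-q)$. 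Hence $\mathcal{M}/(k-q)$ is attained and is the minimum feasible node capacity, so $\alpha_\text{msr}=\mathcal{M}/(k-q)$. Finally, to keep $\mathcal{C}=\mathcal{M}$ at $\alpha=\alpha_\text{msr}$ one needs every positive term to equal $\alpha$, in particular the $i=1$ term (which is present because $q\ge1$), forcing $\beta_I\ge\alpha$; thus the smallest admissible $\gamma$ is $(n_I-1)\alpha$, which pins down $\gamma_\text{msr}$ as in (\ref{Eqn:parameters_for_small_epsilon}) and gives $\alpha=\beta_I$ at that point.

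The main obstacle I anticipate is the first step: rigorously justifying that the greedy "fill clusters unevenly" pattern is the true worst case and reducing the general (somewhat unwieldy) capacity formula of \cite{sohn2016capacity} to the clean $\epsilon=0$ expression above — in particular, verifying that the $i=0$ contribution arises exactly $q$ times and no more. Once that reduction is secured, the remaining steps are just the elementary $\min$-manipulations together with the identity $q(n_I-1)+m=k-q$.
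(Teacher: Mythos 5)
You take a genuinely different route from the paper. The paper's proof is a purely algebraic corollary of its companion journal results: it invokes Corollary~3 of \cite{sohn2017TIT} to write the $\epsilon=0$ MSR point in terms of auxiliary sequences $\zeta_{n_I-2},\lambda_{n_I-2},\delta_{n_I-2}$, substitutes their closed forms from Lemma~5 of \cite{sohn2017TIT} (splitting into the cases $m=n_I-1$ and $m\le n_I-2$), verifies the identity $(\zeta_{n_I-2}-\delta_{n_I-2})/\lambda_{n_I-2}=n_I-1$ in both cases, and then infers $\beta_I=\alpha$ from~(\ref{Eqn:gamma}). You instead work from the capacity expression directly: specialize to $\beta_c=0$, collapse the min-cut to $q\sum_{i=0}^{n_I-1}\min\{\alpha,i\beta_I\}+\sum_{i=n_I-m}^{n_I-1}\min\{\alpha,i\beta_I\}$, and exploit $q(n_I-1)+m=k-q$ to get both the converse $\alpha\ge\mathcal{M}/(k-q)$ and achievability at $\beta_I=\alpha$. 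Your route is more self-contained and makes the structural reason for $\beta_I=\alpha$ at the optimum visible, at the cost of needing the worst-case-pattern reduction; that reduction, however, is exactly what \cite{sohn2016capacity} proves in order to obtain the closed-form capacity, so a citation there discharges your flagged obstacle rather than requiring a fresh argument.

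One caveat to make explicit: your claim that the $i=1$ term is present rests on $q\ge1$; it also holds when $q=0$ and $m=n_I-1$, but if $q=0$ and $m<n_I-1$ (i.e.\ $k<n_I-1$) the smallest surviving index is $n_I-k\ge 2$, and your own argument would then yield only $\beta_I\ge\alpha/(n_I-k)$, contradicting $\alpha=\beta_I$. The paper's route is likewise indeterminate there (one finds $\lambda_{n_I-2}=0$ together with $1-\delta_{n_I-2}/\zeta_{n_I-2}=0$), so the proposition implicitly presumes $k\ge n_I-1$; stating that hypothesis explicitly would close the small gap in your derivation.
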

\begin{proof}
	See Appendix \ref{Section:proof_of_prop_param_small_epsilon}.
\end{proof}

\subsection{Code Construction for $n_I \divides k$}

We now examine how to construct an MSR code for the $n_I \divides k$ case. 
The following theorem shows that a locally repairable code constructed in \cite{papailiopoulos2014locally} with locality $r=n_I-1$ is a valid MSR code for $n_I \divides k$.

\begin{thm}[Exact-repair MSR Code Construction for $\epsilon=0, n_I \divides k$]\label{Thm:LRC1_achieves_MSR}
	Let $\mathds{C}$ be the $(n,r,d,\mathcal{M},\alpha)-$LRC explicitly constructed in \cite{papailiopoulos2014locally} for locality $r=n_I-1$. 
	Consider allocating coded symbols of $\mathds{C}$ in a $[n,k,L]-$clustered DSS, where $r+1=n_I$ nodes within the same repair group of $ \mathds{C}$ are located in the same cluster.	
	Then, the code $\mathds{C}$ is an MSR code for the $[n,k,L]-$ clustered DSS under the conditions of $\epsilon=0$ and $n_I \divides k$.
\end{thm}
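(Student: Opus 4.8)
The plan is to check the three defining properties of an MSR code for the clustered DSS directly against the known guarantees of the construction in \cite{papailiopoulos2014locally}: (i) the node capacity equals $\alpha_\text{msr}$; (ii) every failed node is \emph{exactly} regenerated using only intra-cluster traffic, within the budget $\gamma_\text{msr}$; and (iii) the MDS data-collection property (any $k$ of the $n$ nodes recover $\mathcal{M}$) holds. Proposition~\ref{Prop:parameter_for_small_epsilon} already tells us the target is $(\alpha_\text{msr},\gamma_\text{msr})=\big(\tfrac{\mathcal{M}}{k-q},\,\tfrac{\mathcal{M}}{k-q}(n_I-1)\big)$ with $\alpha=\beta_I$, so it suffices to show $\mathds{C}$ operates exactly at that point.

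First I would fix the parameters and confirm the construction applies. Take $\alpha=\alpha_\text{msr}=\mathcal{M}/(k-q)$ and invoke \cite{papailiopoulos2014locally} with locality $r=n_I-1$; write $K\coloneqq\mathcal{M}/\alpha=k-q$ for the number of information symbols. The construction needs $(r+1)\divides n$ and $r\divides K$. The first holds since $r+1=n_I=n/L$. For the second, the hypothesis $n_I\divides k$ gives $q=k/n_I$, hence $K=k-q=q(n_I-1)=qr$, so $r\divides K$ with $K/r=q$. Thus the construction yields a length-$n$, symbol-size-$\alpha$ code partitioned into $n/(r+1)=L$ repair groups of size $r+1=n_I$, and the node placement in the statement — one repair group per cluster, via the two-dimensional layout $N(l,j)$ — is well defined.

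Next, the repair and data-collection checks. In the code of \cite{papailiopoulos2014locally} any coded symbol is recovered from the $r$ remaining symbols of its repair group; since those $n_I$ symbols lie in a single cluster, a failed node $N(l,j)$ is regenerated by downloading the full $\alpha$ bits from each of the $n_I-1$ surviving nodes of cluster $l$ and nothing from outside. Hence repair is exact with $\beta_I=\alpha$, $\beta_c=0$, and $\gamma=(n_I-1)\beta_I=(n_I-1)\mathcal{M}/(k-q)=\gamma_\text{msr}$, matching Proposition~\ref{Prop:parameter_for_small_epsilon} (including $\alpha=\beta_I$). For the MDS property, the construction of \cite{papailiopoulos2014locally} meets the distance bound $d=n-\ceil{\mathcal{M}/\alpha}-\ceil{\mathcal{M}/(r\alpha)}+2$; substituting $\mathcal{M}/\alpha=K=k-q$ and $\mathcal{M}/(r\alpha)=K/r=q$ (both integers, again by $n_I\divides k$) gives $d=n-(k-q)-q+2=n-k+2\ge n-k+1$. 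So any $k$ coded symbols — indeed any $k-1$ — determine the whole codeword, and the DC requirement is satisfied. Combining the three items shows $\mathds{C}$ sits exactly at $(\alpha_\text{msr},\gamma_\text{msr})$ for $\epsilon=0$, i.e., it is an exact-repair MSR code.

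The only genuinely delicate step is the parameter bookkeeping in the second paragraph: one must make sure the divisibility requirements of the construction in \cite{papailiopoulos2014locally} are \emph{exactly} the ones supplied by $n_I\divides n$ (built into the $[n,k,L]$ model) and $n_I\divides k$ (the standing hypothesis), and that the local-group structure it produces lines up group-for-cluster. Once $K=k-q$ and $K/r=q$ are pinned down, the repair statement is immediate from the locality guarantee and the MDS statement is a one-line substitution into the achieved minimum distance, so no further work is needed beyond citing those guarantees.
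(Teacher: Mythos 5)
Your proof follows essentially the same route as the paper's Appendix A: instantiate the LRC of \cite{papailiopoulos2014locally} with locality $r = n_I - 1$ and one repair group per cluster, identify $\alpha = \mathcal{M}/(k-q)$ via $n_I \divides k$, observe that repair stays within a cluster with $\beta_I = \alpha$, $\beta_c = 0$, $\gamma = (n_I - 1)\alpha$, and invoke the minimum distance to conclude that any $k$ nodes recover the file. Your explicit verification of the divisibility requirements of the construction ($(r+1) \divides n$ from the clustered layout, $r \divides K$ with $K = \mathcal{M}/\alpha = k - q = qr$) is a helpful addition that the paper leaves implicit.

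One discrepancy worth flagging: you plug into the Singleton-type LRC bound and assert that the explicit construction achieves it with equality, obtaining $d = n - (k-q) - q + 2 = n - k + 2$ (and hence the stronger claim that any $k-1$ symbols determine the codeword). The paper instead records the minimum distance of the Section V construction of \cite{papailiopoulos2014locally} as $d = n - k + 1$ and relies only on that. Since either value gives $d \geq n - k + 1$, the MDS property and thus the theorem follow either way, so your logic is intact; but the assertion that the Section V construction meets the 5-parameter distance bound with equality for these particular parameters is a factual claim about \cite{papailiopoulos2014locally} that is not needed and should be double-checked before being used. The safer statement, and the one the paper uses, is simply $d \geq n - k + 1$.
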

\begin{proof}
	See Appendix \ref{Section:proof_of_LRC1_achieves_MSR}.
\end{proof}

Fig. \ref{Fig:MSR_for_epsilon0_divisible} illustrates an example of the MSR code for the $\epsilon=0$ and $n_I \divides k$ case, which is constructed using the LRC in \cite{papailiopoulos2014locally}. In the $[n,k,L]=[6,3,2]$ clustered DSS scenario, the parameters are set to
\begin{align*}
\alpha &= n_I = n/L = 3, \\
\mathcal{M} &= (k-q) \alpha = (k-\floor{k/n_I}) \alpha = 6.
\end{align*}
Thus, each storage node contains $\alpha=3$ symbols, while the $[n,k,L]$ clustered DSS aims to reliably store a file of size $\mathcal{M}=6$.
This code has two properties, 1) \textit{exact regeneration} and 2) \textit{data reconstruction}:
\begin{enumerate}
	\item Any failed node can be exactly regenerated by contacting $n_I-1 = 2$ nodes in the same cluster,
	\item Contacting any $k=3$ nodes can recover the original file $\{x_i^{(j)}: i \in [3], j \in [2] \}$ of size $\mathcal{M}=6$.
\end{enumerate}

The first property is obtained from the fact that $y_i^{(1)}, y_i^{(2)}$ and $s_i = y_i^{(1)}+ y_i^{(2)}$ form a $(3,2)$ MDS code for $i \in [6]$. The second property is obtained as follows. For contacting  arbitrary $k=3$ nodes, three distinct coded symbols $\{y_{i_1}^{(1)}, y_{i_2}^{(1)}, y_{i_3}^{(1)}\}$ having superscript one and three distinct coded symbols $\{y_{j_1}^{(2)}, y_{j_2}^{(2)}, y_{j_3}^{(2)}\}$ having superscript two can be obtained for some 
$i_1, i_2, i_3 \in [6]$ and 
$j_1, j_2, j_3 \in [6]$. 
From Fig. \ref{Fig:MDS_Precoding}, the information $\{y_{i_1}^{(1)}, y_{i_2}^{(1)}, y_{i_3}^{(1)}\}$ suffice to recover $x_1^{(1)}, x_2^{(1)}, x_3^{(1)}$. Similarly, the information $\{y_{j_1}^{(2)}, y_{j_2}^{(2)}, y_{j_3}^{(2)}\}$ suffice to recover $x_1^{(2)}, x_2^{(2)}, x_3^{(2)}$. This completes the proof for the second property.  
Note that this coding scheme is already suggested by the authors of \cite{papailiopoulos2014locally}, while the present paper proves that this code also achieves the MSR point of the $[n,k,L]$ clustered DSS, in the case of $\epsilon=0$ and $n_I \divides k$.

\begin{figure}
	\centering
	\subfloat[][MDS Precoding]{\includegraphics[width=80mm ]{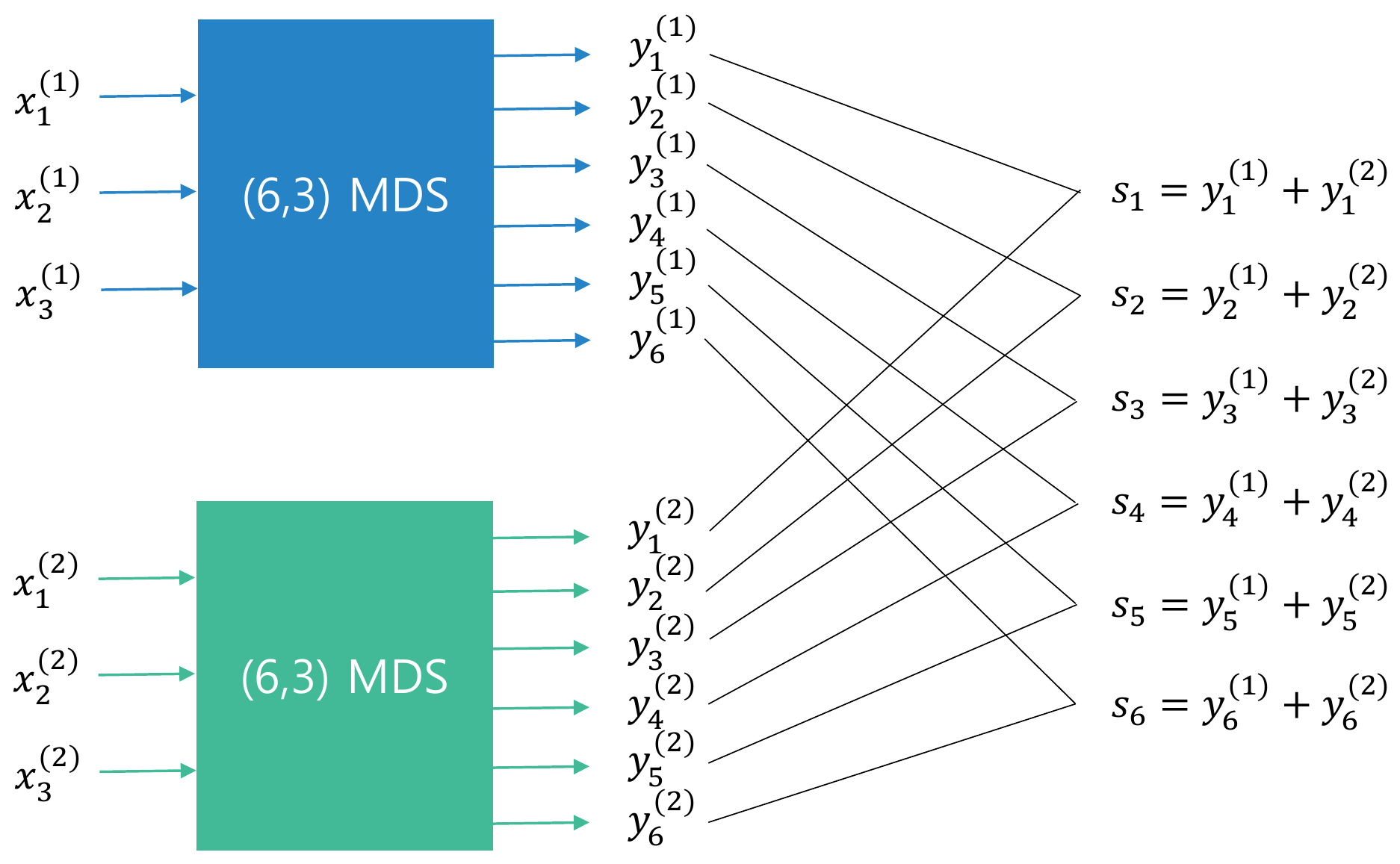}\label{Fig:MDS_Precoding}}
	\quad \quad
	\subfloat[][Allocation of coded symbols into $n$ nodes]{\includegraphics[width=90mm ]{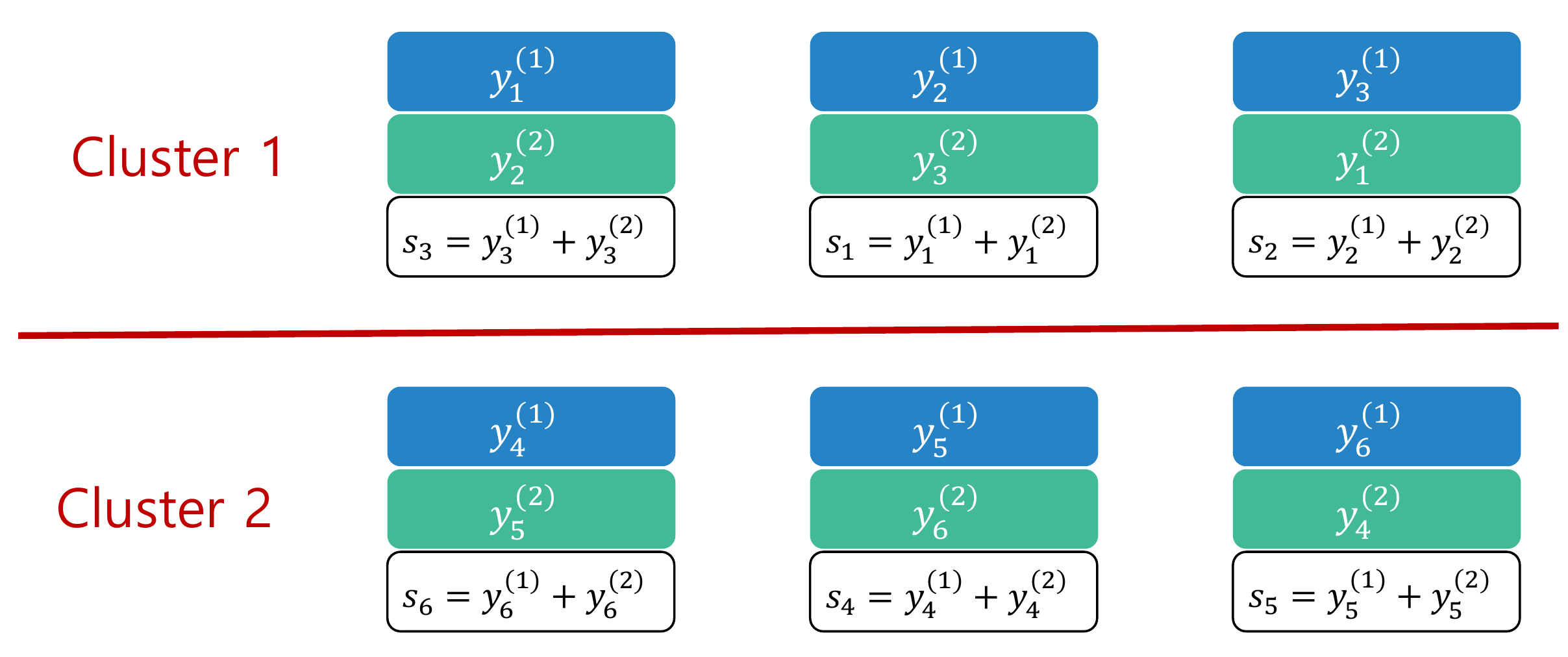}\label{Fig:Alloc_nodes}}
	\caption{MSR code for $\epsilon=0$ with $n_I \divides k$ ($n=6, k=3, L=2$). The construction rule follows the instruction in \cite{papailiopoulos2014locally}, while the concept of the \textit{repair group} in \cite{papailiopoulos2014locally} can be interpreted as the \textit{cluster} in the present paper.} 
	\label{Fig:MSR_for_epsilon0_divisible}
\end{figure}

\subsection{Code Construction for $n_I \notdivides k$}

Here we construct an MSR code when the given system parameters satisfy $n_I \notdivides k$. The theorem below shows that the optimal $(n,k-q,n_I-1)-$LRC designed in \cite{tamo2016optimal} is a valid MSR code when $n_I \notdivides k$ holds.

\begin{thm}[Exact-repair MSR Code Construction for $\epsilon=0, n_I \notdivides k$]\label{Thm:LRC2_achieves_MSR}
	Let $\mathds{C}$ be the $(n_0,k_0,r_0)-$LRC constructed in \cite{tamo2016optimal} for $n_0=n, k_0=k-q$ and $r_0=n_I-1$.
	Consider allocating the coded symbols of $\mathds{C}$ in a $[n,k,L]-$clustered DSS, where $r+1=n_I$ nodes within the same repair group of $ \mathds{C}$ are located in the same cluster.
	Then, $\mathds{C}$ is an MSR code for the $[n,k,L]-$clustered DSS under the conditions of $\epsilon=0$ and $n_I \notdivides k$.
\end{thm}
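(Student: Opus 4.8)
The plan is to follow the same template as the proof of Theorem~\ref{Thm:LRC1_achieves_MSR}: I would show that the code $\mathds{C}$, laid out so that each repair group of $\mathds{C}$ occupies one cluster, simultaneously (i) repairs any failed node \emph{exactly} using only intra-cluster traffic, at the bandwidth $\gamma_\text{msr}$ of Proposition~\ref{Prop:parameter_for_small_epsilon}, and (ii) has the MDS data-reconstruction property, i.e.\ any $k$ nodes determine the whole file. By Proposition~\ref{Prop:parameter_for_small_epsilon}, for $\epsilon=0$ the target operating point is $\alpha_\text{msr}=\mathcal{M}/(k-q)$, $\beta_I=\alpha_\text{msr}$, $\beta_c=0$, $\gamma_\text{msr}=(n_I-1)\alpha_\text{msr}$, so it suffices to verify (i) and (ii) for a code with exactly these resource parameters. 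For a general file size $\mathcal{M}$ that is a multiple of $k-q$ one simply takes $\alpha=\mathcal{M}/(k-q)$ parallel copies (stripes) of $\mathds{C}$ — equivalently one copy of $\mathds{C}$ over $\mathbb{F}_{2^\alpha}$ — so it is enough to treat a single codeword, with one field symbol per node; the striping is transparent to both arguments below.

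For the repair side, I would use the locality property of $\mathds{C}$ directly. Since the $r_0+1=n_I$ members of each repair group of $\mathds{C}$ are placed in a common cluster, every coordinate of $\mathds{C}$ is a function of $n_I-1$ other coordinates lying in its own cluster. Hence a failed node is recovered (exactly, as a linear combination of the received symbols) by downloading the full content of each of the $n_I-1$ surviving nodes in its cluster and nothing from the other $L-1$ clusters: this is precisely $\beta_I=\alpha_\text{msr}$, $\beta_c=0$, and $\gamma=(n_I-1)\alpha_\text{msr}=\gamma_\text{msr}$, and it is exact repair because the LRC recovers the exact lost symbol on every stripe.

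The remaining — and main — point is data reconstruction, which is where the hypothesis $n_I\notdivides k$ enters. Contacting any $k$ nodes reveals $k$ coordinates of the codeword of $\mathds{C}$, so those $k$ nodes determine the $k-q$ information symbols exactly when the minimum distance $d$ of $\mathds{C}$ satisfies $d\ge n-k+1$. I would invoke the generalized Singleton bound for LRCs, $d\le n-(k-q)-\lceil (k-q)/(n_I-1)\rceil+2$, which an optimal $(n,k-q,n_I-1)$-LRC such as the one of \cite{tamo2016optimal} meets with equality, and then evaluate the ceiling using the decomposition $k=qn_I+m$ with $1\le m\le n_I-1$ (valid precisely because $n_I\notdivides k$): then $k-q=q(n_I-1)+m$, so $q<(k-q)/(n_I-1)\le q+1$ and therefore $\lceil (k-q)/(n_I-1)\rceil=q+1$. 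Substituting back gives $d=n-(k-q)-(q+1)+2=n-k+1$, so $\mathds{C}$ corrects any $n-k$ erasures and the whole file is recoverable from any $k$ nodes, on each stripe.

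Putting the two halves together: the (striped) code stores $\alpha_\text{msr}$ per node, repairs exactly at bandwidth $\gamma_\text{msr}$ with no cross-cluster traffic, and is MDS, hence it is an MSR code for the $[n,k,L]$-clustered DSS at $\epsilon=0$. I expect the distance computation to be the only delicate step: it is exactly the condition $n_I\notdivides k$ (i.e.\ $m\ge 1$) that forces $\lceil (k-q)/(n_I-1)\rceil=q+1$ and thus pins the optimal LRC distance to the MDS-critical value $n-k+1$ — in the divisible case $m=0$ the same bound yields the strictly larger $n-k+2$, which is why that regime is handled separately in Theorem~\ref{Thm:LRC1_achieves_MSR}. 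I would also remark in passing that the construction of \cite{tamo2016optimal} is indeed available with these parameters, since $r_0+1=n_I$ divides $n=n_IL$.
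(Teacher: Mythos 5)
Your proposal is correct and follows essentially the same route as the paper's own proof in Appendix~\ref{Section:proof_of_LRC2_achieves_MSR}: both establish exact repair at $(\beta_I,\beta_c)=(\alpha,0)$ by exploiting the repair-group/cluster alignment, and both establish data reconstruction by invoking the optimal-distance formula of \cite{tamo2016optimal} (Lemma~\ref{Lemma:Result_of_Tamo}), evaluating $\lceil (k-q)/(n_I-1)\rceil = q+1$ via $k=qn_I+m$ with $1\le m\le n_I-1$ to get $d=n-k+1$. The only cosmetic difference is that the paper spells out the RS-plus-intra-group-MDS layering of the Tamo--Barg construction (Fig.~\ref{Fig:Epsilon_nondivisible}) where you appeal to the locality property abstractly, and your closing remark correctly explains why $n_I\divides k$ must be handled separately in Theorem~\ref{Thm:LRC1_achieves_MSR}.
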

\begin{proof}
	See Appendix \ref{Section:proof_of_LRC2_achieves_MSR}.
\end{proof}

Fig. \ref{Fig:Epsilon0_nondivisible_construct} illustrates an example of code construction for the $n_I \notdivides k$ case. Without loss of generality, we consider $\alpha=1$ case; parallel application of this code multiple $\alpha$ times achieves the MSR point for general $\alpha \in \mathbb{N}$, where $\mathbb{N}$ is the set of positivie integers.
In the $[n=6,k=4,L=2]$ clustered DSS with $\epsilon=0$, the code and system parameters are:
\begin{align*}
[n_0,k_0,r_0]&=[n,k-q,n_I-1]=[6,3,2],\\
\alpha&=1, \\
\mathcal{M}&=(k-q)\alpha=(k-\floor{k/n_I})=3
\end{align*}
from Proposition \ref{Prop:parameter_for_small_epsilon}.
The code in Fig. \ref{Fig:Epsilon0_nondivisible_construct} satisfies the \textit{exact regeneration} and \textit{data reconstruction} properties:
\begin{enumerate}
	\item Any failed node can be exactly regenerated by contacting $n_I-1 = 2$ nodes in the same cluster,
	\item Contacting any $k=4$ nodes can recover the original file $\{x_i: i \in [3]\}$ of size $\mathcal{M}=3$.
\end{enumerate} 
Note that $\{y_i\}_{i=1}^3$ in Fig. \ref{Fig:Epsilon0_nondivisible_construct} is a set of coded symbols generated by a $(3,2)-$MDS code, and this statement also holds for $\{y_i\}_{i=4}^6$. This proves the first property. The second property is directly from the result of \cite{tamo2016optimal}, which states that the minimum distance of the $[n_0,k_0,r_0]-LRC$ is
\begin{align}
d&=n_0-k_0-\left\lceil\dfrac{k_0}{r_0}\right\rceil+2 =6-3-\ceil{3/2}+2=3. 
\end{align}
Note that the $[n_0,k_0,r_0]-LRC$ is already suggested by the authors of \cite{tamo2016optimal}, while the present paper proves that applying this code with $n_0=n, k_0=k-q, r_0=n_I-1$ achieves the MSR point of the $[n,k,L]-$clustered DSS, in the case of $\epsilon=0$ and $n_I \notdivides k$.

\begin{figure}
	\centering
	\subfloat[][Encoding structure]{\includegraphics[width=40mm ]{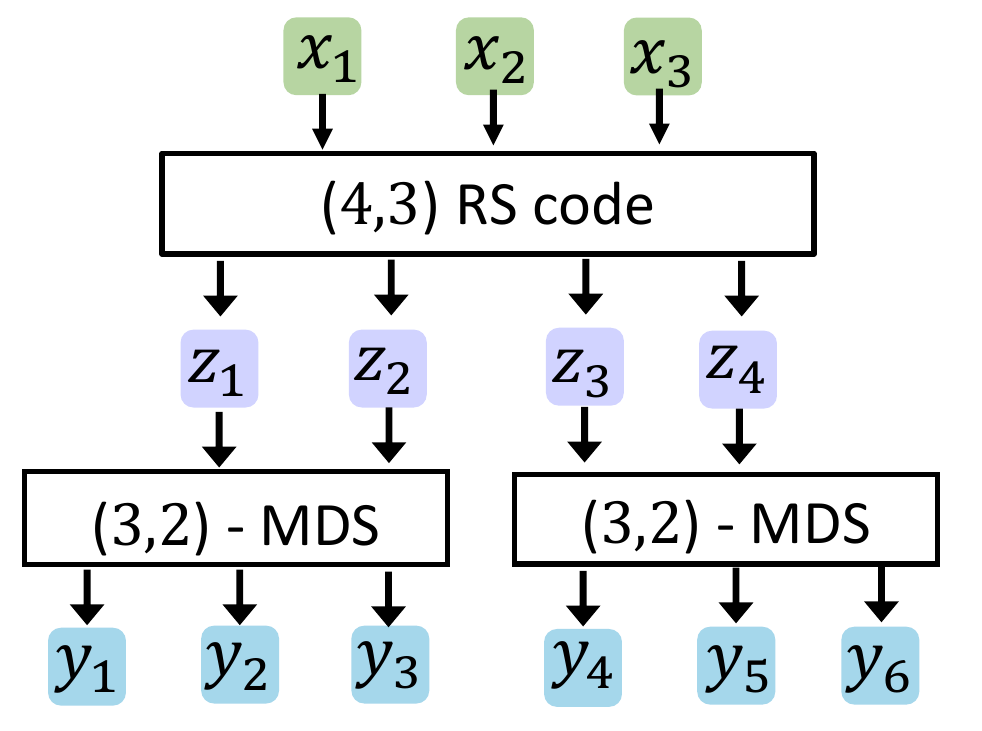}\label{Fig:Epsilon0_nondivisible_ENC}}
	\quad \quad
	\subfloat[][Allocation of coded symbols into $n$ nodes]{\includegraphics[width=40mm ]{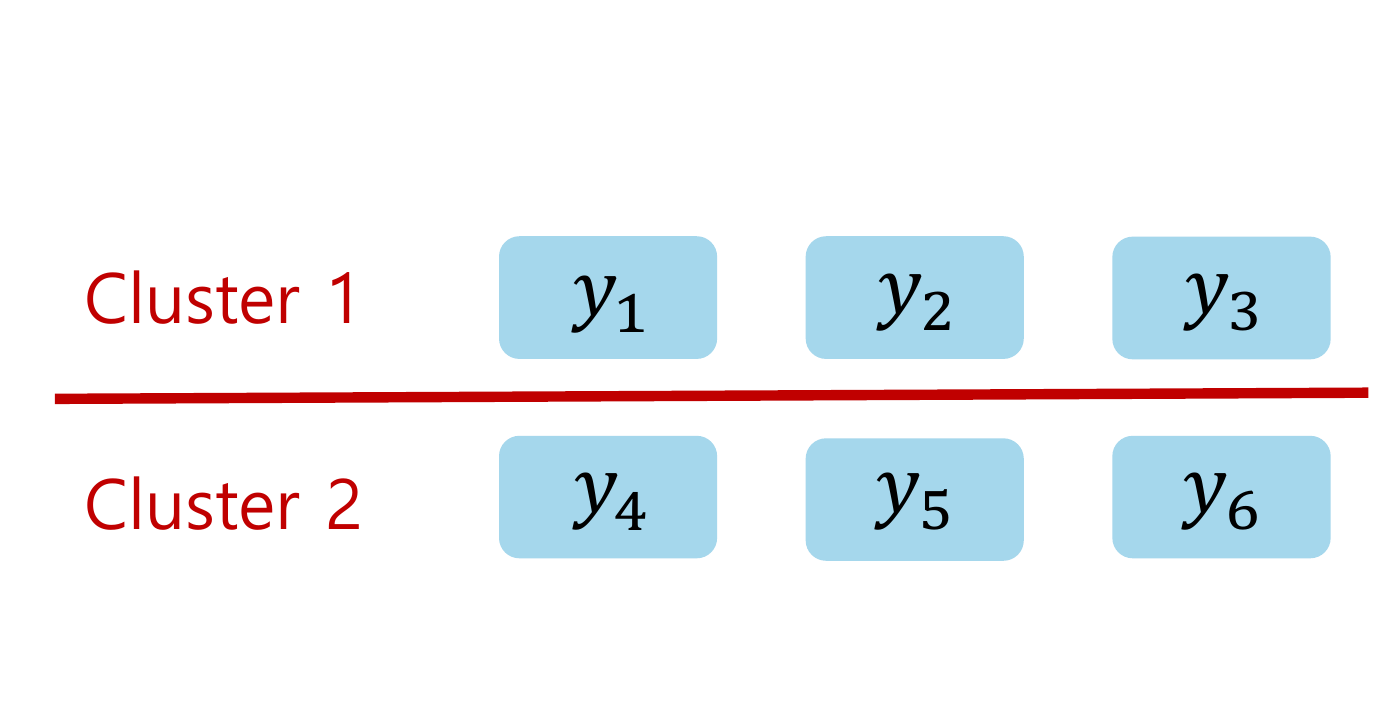}\label{Fig:Epsilon0_nondivisible_alloc}}
	\caption{MSR code for $\epsilon=0$ with $n_I \notdivides k$ case ($n=6, k=4, L=2$). The encoding structure follows from the instruction in \cite{tamo2016optimal}, which constructed $[n_0,k_0,r_0]-LRC$. This paper utilizes $[n,k-q,n_I-1]-LRC$ to construct MSR code for $[n,k,L]$ clustered DSS, in the case of $\epsilon=0$ with $n_I \notdivides k$.}
	\label{Fig:Epsilon0_nondivisible_construct}
\end{figure}

\section{MSR Code Design for $\epsilon = \frac{1}{n-k}$}

We propose an MSR code for $\epsilon=\frac{1}{n-k}$ in clustered DSSs. 
From (\ref{Eqn:alpha_MSR_large_epsilon_val}) and (\ref{Eqn:alpha_MSR_small_epsilon_val}), recall that $\frac{1}{n-k}$ is the minimum $\epsilon$ value which allows the minimum storage of $\alpha_\text{msr}=\mathcal{M}/k$.
First, we obtain the system parameters for the MSR point. Second, we design a coding scheme which is shown to be an MSR code under the conditions of $n=2k$ and $L=2$.

\subsection{Parameter Setting for the MSR Point}

The following property specifies the system parameters for the $\epsilon=1/(n-k)$ case. Without a loss of generality, we set the cross-cluster repair bandwidth as $\beta_c=1$. 

\begin{prop}\label{Prop:parameter_for_large_epsilon}
	The MSR point for $\epsilon=1/(n-k)$ is
	\begin{equation}\label{Eqn:parameters_for_large_epsilon}
	(\alpha_\text{msr}, \gamma_\text{msr}) = \left(\frac{\mathcal{M}}{k}, \frac{\mathcal{M}}{k} \left(n_I - 1 + \frac{n-n_I}{n-k}\right)\right).
	\end{equation}
	This point satisfies
	$\alpha=\beta_I = n-k$ and $\mathcal{M} = k(n-k)$. 
\end{prop}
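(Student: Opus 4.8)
The plan is to read off the MSR point from the closed-form capacity expression of Theorem~1 in \cite{sohn2016capacity} together with the characterization in \eqref{Eqn:alpha_MSR_large_epsilon_val}. Since $\epsilon = 1/(n-k)$ meets the hypothesis $\epsilon \ge 1/(n-k)$, equation \eqref{Eqn:alpha_MSR_large_epsilon_val} immediately gives the first coordinate, $\alpha_\text{msr} = \mathcal{M}/k$. For the second coordinate I would first eliminate $\beta_c$: substituting $\beta_c = \epsilon\beta_I = \beta_I/(n-k)$ into \eqref{Eqn:gamma} gives $\gamma = \beta_I\bigl(n_I - 1 + \tfrac{n-n_I}{n-k}\bigr)$, so it suffices to show that $\beta_I = \mathcal{M}/k$ at the MSR point.

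To establish $\beta_I = \mathcal{M}/k$, recall that the information-flow bound behind $\mathcal{C}(\alpha,\gamma)$ expresses the value of a data-collector cut, for a given contact order, as a sum of $k$ terms, each of the form $\min\{\alpha,\rho\}$, where $\rho$ is the repair bandwidth into the associated contacted node that is \emph{not} already carried by previously contacted nodes. Hence $\mathcal{C}(\alpha,\gamma)\le k\alpha$ unconditionally, and since storing the file requires $\mathcal{C}(\mathcal{M}/k,\gamma)\ge \mathcal{M}=k\alpha_\text{msr}$, feasibility forces the minimum cut value to be exactly $k\alpha_\text{msr}$; equivalently, the smallest residual bandwidth $\rho$ over all admissible data-collector patterns must be at least $\alpha_\text{msr}$. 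I would then compute that smallest residual: take a pattern in which one cluster is contacted in full and look at the node of that cluster read last, so that all $n_I-1$ of its intra-cluster helpers and exactly $k-n_I$ of its cross-cluster helpers have already been contacted; the residual bandwidth into it is then $0\cdot\beta_I + \bigl((n-n_I)-(k-n_I)\bigr)\beta_c = (n-k)\beta_c$. Because $\beta_I>\beta_c$, it is optimal for the worst-case data collector to exhaust intra-cluster helpers before cross-cluster ones, so no other pattern yields a smaller residual, and the feasibility condition becomes $(n-k)\beta_c \ge \alpha_\text{msr}$, i.e. $\beta_I \ge \mathcal{M}/k$.

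The MSR point is the feasible point with the smallest $\gamma$, hence the smallest $\beta_I$, so the last inequality is tight: $\beta_I=\mathcal{M}/k$. Substituting into $\gamma=\beta_I\bigl(n_I-1+\tfrac{n-n_I}{n-k}\bigr)$ reproduces \eqref{Eqn:parameters_for_large_epsilon}. Finally, under the normalization $\beta_c=1$, the relation $\epsilon=\beta_c/\beta_I=1/(n-k)$ gives $\beta_I=n-k$, and combining this with $\beta_I=\alpha_\text{msr}=\mathcal{M}/k$ yields $\alpha=\beta_I=n-k$ and $\mathcal{M}=k\alpha=k(n-k)$, as claimed.

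I expect the genuine work to be in the second paragraph: verifying that $(n-k)\beta_c$ really is the minimum residual-bandwidth term over all data-collector configurations, so that no other cut imposes a stricter lower bound on $\beta_I$, and checking that this matches the precise case structure of the capacity formula in \cite{sohn2016capacity}. In particular the argument uses that a full cluster can be contacted, which is automatic in the regime $n=2k$, $L=2$ (where $n_I=k$ and $q=1$) treated in the sequel, but in general relies on $q\ge 1$. The remaining manipulations are routine bookkeeping with \eqref{Eqn:gamma} and the normalization $\beta_c=1$.
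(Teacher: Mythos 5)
Your proof takes a genuinely different route from the paper's. The paper's own argument is a short citation-plus-algebra derivation: it reads the MSR point directly from Corollary~3 of \cite{sohn2017TIT} as $(\mathcal{M}/k,\ \mathcal{M}/(k\,s_{k-1}))$, plugs in the closed form of $s_{k-1}$ for $\epsilon=1/(n-k)$ to get the second coordinate, then combines that with \eqref{Eqn:gamma} under the normalization $\beta_c=1$ to solve $\gamma=\frac{\mathcal{M}}{k}\cdot\frac{\gamma}{n-k}$ for $\mathcal{M}=k(n-k)$ and $\alpha=n-k$. You instead only use \eqref{Eqn:alpha_MSR_large_epsilon_val} for the first coordinate and reconstruct the information-flow/min-cut argument from scratch to pin down $\beta_I=\mathcal{M}/k$: each term of the data-collector cut is $\min\{\alpha,\rho_i\}\le\alpha$, feasibility at $\alpha=\mathcal{M}/k$ forces every residual to be at least $\alpha$, and you identify the tightest residual as $(n-k)\beta_c$ by having the last-read node exhaust all $n_I-1$ intra-cluster helpers plus $k-n_I$ cross-cluster helpers. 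What the paper's approach buys is generality and rigour at the cost of opacity --- it inherits whatever case structure $s_{k-1}$ carries in \cite{sohn2017TIT} without re-deriving it. Your approach is more self-contained and reveals the operational reason $\beta_I=\alpha$ at this MSR point, but, as you correctly flag, the step ``contact a full cluster before the last node'' needs $k\ge n_I$, i.e.\ $q\ge 1$. The proposition as stated carries no such restriction, and for $q=0$ the greedy adversarial ordering gives a larger minimum residual $(n_I-k)\beta_I+(n-n_I)\beta_c$, so your feasibility inequality would yield a \emph{weaker} lower bound on $\beta_I$ and would not reproduce \eqref{Eqn:parameters_for_large_epsilon} by itself. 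That case must be closed either by quoting the capacity formula of \cite{sohn2016capacity,sohn2017TIT} as the paper does, or by a separate matching achievability/converse argument; it does not simply follow from the residual computation you set up. For the regime actually used downstream ($n=2k$, $L=2$, $q=1$) your argument is sound, and the final bookkeeping ($\beta_I=1/\epsilon=n-k$ under $\beta_c=1$, hence $\alpha=\beta_I=n-k$ and $\mathcal{M}=k(n-k)$) matches the paper exactly.
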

\begin{proof}
	See Appendix \ref{Section:proof_of_prop_param_large_epsilon}.
\end{proof}

\subsection{Code Construction for $[n,k,L]=[2k,k,2]$}

Here, we construct an MSR code under the constraints of $n=2k$ and $L=2$. Since we consider the $n=2k$ case, the system parameters in Proposition \ref{Prop:parameter_for_large_epsilon} are set to
\begin{align}
\alpha &= \beta_I = n-k=k, \label{Eqn:alpha_MSR_large_epsilon}\\
\mathcal{M} &= k\alpha = k^2.  \nonumber\label{Eqn:filesize_MSR_large_epsilon}
\end{align}

\begin{construction}\label{Construct:MSR_for_2k_k_2}
	Suppose that we are given $\mathcal{M}=k^2$ source symbols $\{m_{i,j}: i,j \in [k]\}$. Moreover, let the encoding matrix
	\begin{equation}\label{Eqn:ENC_MAT}
	G=
	\begin{bmatrix}
	G_1^{(1)} & G_1^{(2)} & \cdots & G_1^{(k)}\\
	G_2^{(1)} & G_2^{(2)} & \cdots & G_2^{(k)}\\
	\vdots & \vdots & \ddots & \vdots \\
	G_k^{(1)} & G_k^{(2)} & \cdots & G_k^{(k)}
	\end{bmatrix}
	\end{equation}
	be a $k^2 \times k^2$ matrix, where each encoding sub-matrix $G_i^{(j)}$ is a $k \times k$ matrix. 	
	 For $j \in [k]$, node $N(1,j)$ stores $\mathbf{m}_j$ and node $N(2,j)$ stores $\mathbf{p}_j$, where 
	\begin{align}
	\mathbf{m}_{i}&=[m_{i,1}, \cdots, m_{i,k}]^T, \\
	\mathbf{p}_{i}&=[p_{i,1}, \cdots, p_{i,k}]^T=\sum_{j=1}^k \mathbf{m}_{j}^T G_i^{(j)}.\label{Eqn:parity}
	\end{align}
\end{construction}

\begin{remark}
	The code generated in Construction \ref{Construct:MSR_for_2k_k_2} satisfies the followings:
	\begin{enumerate}[label=(\alph*)]
		\item Every node in cluster $1$ contains $k$ message symbols.
		\item Every node in cluster $2$ contains $k$ parity symbols.
	\end{enumerate}
\end{remark}

Note that this remark is consistent with  (\ref{Eqn:alpha_MSR_large_epsilon}), which states $\alpha=k$. Under this construction, we have the following theorem, which specifies the MSR construction rule for the $[n=2k,k,L=2]-$DSS with $\epsilon=1/(n-k)$.

\begin{thm}[Exact-repair MSR Code Construction for $\epsilon=\frac{1}{n-k}$]\label{Thm:MSR_code_for_2k_k_2_1/k}
	If all square sub-matrices of $G$ are invertible, the code designed by Construction \ref{Construct:MSR_for_2k_k_2} is an MSR code for $[n,k,L]=[2k,k,2]-$DSS with $\epsilon=1/(n-k)$.
\end{thm}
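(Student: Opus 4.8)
The plan is to check the two defining properties of an MSR code at the point of Proposition~\ref{Prop:parameter_for_large_epsilon}, instantiated with $n_I=n/L=k$, $\alpha=\beta_I=k$, $\beta_c=1$ and $\mathcal{M}=\gamma_\text{msr}=k^2$: (i) the MDS/data-reconstruction property, that any $k$ of the $2k$ nodes recover the file of $k^2$ symbols, and (ii) exact repair of any single node by downloading $\beta_I=k$ symbols from each of the $k-1$ co-located nodes and $\beta_c=1$ symbol from each of the $k$ nodes in the other cluster; part (ii) splits into repairing a cluster-$1$ (message) node and a cluster-$2$ (parity) node. Observe that $\beta_I=\alpha$, so every intra-cluster helper simply forwards its entire content; the only design choice is the single symbol each cross-cluster helper sends, and the whole proof amounts to choosing those functionals so that certain linear systems — whose coefficient matrices are square submatrices of $G$, or of matrices derived from $G$ — are invertible, which is exactly what the hypothesis provides. (Note also $\gamma_\text{msr}=\mathcal{M}$, so the repair budget equals the file size.)

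For the MDS property, let a data collector contact $a$ nodes of cluster~$1$ and $b=k-a$ of cluster~$2$; it reads $a$ of the message vectors $\mathbf{m}_j$ directly and $b$ of the parity vectors $\mathbf{p}_i$. Substituting the known message vectors into each collected parity relation $\mathbf{p}_i^{T}=\sum_{j}\mathbf{m}_j^{T}G_i^{(j)}$ leaves a linear system in the $b$ missing message vectors whose coefficient matrix is the $bk\times bk$ block submatrix of $G$ carved out by the block-rows indexed by the contacted cluster-$2$ nodes and the block-columns indexed by the uncontacted cluster-$1$ nodes. Since $b=k-a$, this submatrix is square, hence invertible by hypothesis, so the missing $\mathbf{m}_j$, and thus the whole file, are recovered; the extremes $a=k$ and $a=0$ are included, the latter using invertibility of $G$ itself.

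For repairing a cluster-$1$ node $N(1,f)$, the $k-1$ surviving cluster-$1$ nodes hand over $\mathbf{m}_j$ for $j\neq f$, and we let $N(2,i)$ transmit the scalar $\mathbf{p}_i^{T}\mathbf{v}_i$ with $\mathbf{v}_i=(G_i^{(f)})^{-1}\mathbf{e}_i$, where $\mathbf{e}_i$ is the $i$th unit vector; this is legitimate because each $G_i^{(f)}$ is a $k\times k$ block of $G$ and hence invertible. Expanding $\mathbf{p}_i^{T}\mathbf{v}_i=\mathbf{m}_f^{T}(G_i^{(f)}\mathbf{v}_i)+\sum_{j\neq f}\mathbf{m}_j^{T}(G_i^{(j)}\mathbf{v}_i)$ and subtracting the now-known second term yields $\mathbf{m}_f^{T}\mathbf{e}_i=m_{f,i}$, and letting $i$ range over $[k]$ returns all of $\mathbf{m}_f$; the bandwidth spent is $(k-1)k+k=k^2=\gamma_\text{msr}$.

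The crux is repairing a cluster-$2$ node $N(2,f)$: the $k-1$ surviving cluster-$2$ nodes supply $\{\mathbf{p}_i\}_{i\neq f}$ in full, cluster-$1$ node $N(1,j)$ supplies one symbol $\mathbf{u}_j^{T}\mathbf{m}_j$, and $\mathbf{p}_f$ must be recovered. Because $G$ is invertible, the full set of parities $\{\mathbf{p}_i\}_{i=1}^{k}$ linearly determines every $\mathbf{m}_j$, say $\mathbf{m}_j=\sum_i B_{j,i}\mathbf{p}_i$ with the $B_{j,i}$ determined by $G$; knowing $\{\mathbf{p}_i\}_{i\neq f}$, this gives $\mathbf{m}_j=B_{j,f}\mathbf{p}_f+(\text{known})$, so after cancellation the collected cluster-$1$ symbols become $k$ scalar functionals $\{(B_{j,f}^{T}\mathbf{u}_j)^{T}\mathbf{p}_f\}_{j=1}^{k}$ of the unknown $\mathbf{p}_f$. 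Hence $\mathbf{p}_f$ is recovered as soon as $[\,B_{1,f}^{T}\mathbf{u}_1\;\cdots\;B_{k,f}^{T}\mathbf{u}_k\,]$ is nonsingular, and taking $\mathbf{u}_j=B_{j,f}^{-T}\mathbf{e}_j$ makes it the identity — provided each $B_{j,f}$ is invertible. This last point is where the full strength of the hypothesis is needed: the $B_{j,i}$ are the $k\times k$ blocks of the inverse of the $k^2\times k^2$ generator matrix built from the $G_i^{(j)}$, and by the Jacobi identity a minor of an inverse equals, up to sign, a complementary minor of the original divided by its determinant, so total nonsingularity of $G$ forces total nonsingularity of the inverse generator and, in particular, invertibility of every $B_{j,f}$. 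The bandwidth is again $(k-1)k+k=k^2=\gamma_\text{msr}$, finishing the proof. I expect this parity-node repair — picking the right cross-cluster functionals and certifying the associated $k\times k$ matrix via the passage to the inverse — to be the main obstacle; the MDS step and the message-node repair rely only on invertibility of explicit block submatrices of $G$, and the bandwidth bookkeeping is a routine match against Proposition~\ref{Prop:parameter_for_large_epsilon}.
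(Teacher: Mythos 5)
Your proof is correct, but your repair arguments take a genuinely different route from the paper's, and the difference is worth recording. For the MDS step you and the paper argue identically: after substituting the collected message vectors, what remains is a square block submatrix of $G$, invertible by hypothesis. For systematic-node repair the paper uses a fixed choice of projection: every cross-cluster helper $N(2,j)$ sends its \emph{last} coordinate $p_{j,k}$, the stacked equations then reduce to a $k\times k$ square submatrix of $G$, which is inverted. You instead tailor $\mathbf{v}_i=(G_i^{(f)})^{-1}\mathbf{e}_i$ so the returned scalar is directly $m_{f,i}$; this also works (each block $G_i^{(f)}$ is a square submatrix of $G$), but it is a heavier choice than necessary. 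The real divergence is parity-node repair. The paper again uses fixed last-coordinate projections $m_{j,k}$ from each cluster-$1$ node, stacks the $k(k-1)$ known parities, cancels the known coordinates, and inverts the resulting $k(k-1)\times k(k-1)$ square submatrix of $G$; this recovers \emph{all} remaining message symbols, and $\mathbf{p}_f$ is then simply re-encoded. You instead pass to the inverse generator $B=G^{-1}$, express the collected cluster-$1$ scalars as functionals of $\mathbf{p}_f$ alone, pick $\mathbf{u}_j=B_{j,f}^{-T}\mathbf{e}_j$, and certify that each $B_{j,f}$ is invertible via Jacobi's identity relating minors of $G^{-1}$ to complementary minors of $G$. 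Both are valid. The paper's route is more elementary and uniform — every step of the theorem is literally ``invert a square submatrix of $G$,'' which is the hypothesis verbatim, with no detour through $G^{-1}$ and no appeal to Jacobi. Your route is more symmetric between the two repair cases and recovers $\mathbf{p}_f$ without first reconstructing the entire message, but at the cost of invoking the nontrivial (though true) fact that total nonsingularity of $G$ forces total nonsingularity of $G^{-1}$. Given that the hypothesis already hands you a $k(k-1)\times k(k-1)$ invertible submatrix for free, the paper's shortcut is the cleaner use of the assumption.
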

\begin{proof}
	See Appendix \ref{Section:proof_MSR_code_for_2k_k_2_1/k}.
\end{proof}

The following result suggests an explicit construction of an MSR code using the finite field.
\begin{corollary}
	Applying Construction \ref{Construct:MSR_for_2k_k_2} with encoding matrix $G$ set to the $k^2 \times k^2$ Cauchy matrix \cite{10.2307/j.ctt7t833} achieves the MSR point for an $[n=2k,k,L=2]-$DSS. A finite field of size $2k^2$ suffices to design $G$.
\end{corollary}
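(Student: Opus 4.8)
The plan is to invoke Theorem~\ref{Thm:MSR_code_for_2k_k_2_1/k} and thereby reduce the corollary to a single structural fact: every square submatrix of a Cauchy matrix is invertible. Recall that a $k^2\times k^2$ Cauchy matrix over a field $\mathbb{F}$ is built from two lists of distinct elements $x_1,\dots,x_{k^2}$ and $y_1,\dots,y_{k^2}$ of $\mathbb{F}$, with the two lists disjoint, by setting the $(i,j)$ entry equal to $(x_i-y_j)^{-1}$. First I would recall the classical closed form for its determinant,
\[
\det\Big[(x_i-y_j)^{-1}\Big]_{i,j\in[k^2]} \;=\; \frac{\prod_{1\le i<j\le k^2}(x_j-x_i)(y_i-y_j)}{\prod_{i,j\in[k^2]}(x_i-y_j)},
\]
which is a nonzero element of $\mathbb{F}$ precisely because the $x_i$ are pairwise distinct, the $y_j$ are pairwise distinct, and $x_i\neq y_j$ for all $i,j$. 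Hence the chosen $G$ is itself invertible.

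Next I would observe that the hypothesis in Theorem~\ref{Thm:MSR_code_for_2k_k_2_1/k} concerns \emph{all} square submatrices of $G$, and that restricting a Cauchy matrix to any row subset $I$ and any column subset $J$ with $|I|=|J|$ again yields a Cauchy matrix, namely the one associated with the sublists $\{x_i:i\in I\}$ and $\{y_j:j\in J\}$. Applying the determinant identity above to this smaller Cauchy matrix shows that every square submatrix of $G$ has nonzero determinant, so the hypothesis of Theorem~\ref{Thm:MSR_code_for_2k_k_2_1/k} is satisfied and Construction~\ref{Construct:MSR_for_2k_k_2} instantiated with this $G$ yields an MSR code for the $[2k,k,2]$-DSS with $\epsilon=1/(n-k)$. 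Note that the block partition $G=[G_i^{(j)}]$ in Construction~\ref{Construct:MSR_for_2k_k_2} is irrelevant to this step; it is merely a relabeling of the same $k^2\times k^2$ matrix, and no constraint is placed on the individual blocks $G_i^{(j)}$.

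Finally, for the field-size claim I would simply count the distinct symbols that must be named: a $k^2\times k^2$ Cauchy matrix requires $k^2$ distinct values $x_i$ together with a further $k^2$ distinct values $y_j$, with the two sets disjoint so that $x_i\neq y_j$ always holds. A finite field with at least $2k^2$ elements therefore contains enough distinct elements to fix such lists, so a field of size $2k^2$ suffices to construct $G$.

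I do not anticipate a genuine obstacle here. Once the Cauchy determinant identity and the elementary observation that a square submatrix of a Cauchy matrix is again a Cauchy matrix are in hand, the argument is immediate; the only point requiring a little care is to phrase the field-size requirement correctly as the need for two disjoint sets of $k^2$ distinct elements, giving the bound $2k^2$.
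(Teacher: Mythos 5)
Your proposal is correct and follows the same route as the paper: invoke Theorem~\ref{Thm:MSR_code_for_2k_k_2_1/k}, reduce to the fact that every square submatrix of a Cauchy matrix is invertible, and count that two disjoint lists of $k^2$ distinct elements require a field of size $2k^2$. The only difference is presentational: where the paper simply cites \cite{shah2010explicit} for the full-rank-submatrix property and \cite{suh2011exact} for the field-size bound, you supply the classical Cauchy determinant formula and the observation that a square submatrix of a Cauchy matrix is again a Cauchy matrix, which is a welcome unpacking but not a different argument.
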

\begin{proof}
	The proof is directly from Theorem \ref{Thm:MSR_code_for_2k_k_2_1/k} and the fact that all sub-matrices of a Cauchy matrix has full rank, as stated in \cite{shah2010explicit}. Moreover, the Cauchy matrix of size $n \times n$ can be constructed using a finite field of size $2n$, according to \cite{suh2011exact}.
\end{proof}

An example of MSR code designed by Construction \ref{Construct:MSR_for_2k_k_2} is illustrated in Fig. \ref{Fig:MSR_large_epsilon_toy}, in the case of $n=4, k=2, L=2$. 
This coding scheme utilizes a Cauchy matrix 
\begin{equation}
G=
\begin{bmatrix}
7 & 2 & 3 & 4 \\
2 & 7 & 4 & 3\\
3 & 4 & 7 & 2\\
4 & 3 & 2 & 7
\end{bmatrix}
\end{equation}
using the finite field $GF(2^3)$ with the primitive polynomial $x^3+x+1$.
The element $a\alpha^2+b\alpha+c$ in $GF(2^3)$ is denoted by the decimal number of $(abc)_2$, where $\alpha$ is the primitive element. For example, $\alpha+1$ is denoted by $3=(011)_2$ in the generator matrix $G$.
When $[n,k,L,\epsilon]=[4,2,2,1/2]$, the system parameters are
\begin{equation*}
\alpha=2, \mathcal{M}=4, \beta_I = 2, \beta_c = 1
\end{equation*}
from Proposition \ref{Prop:parameter_for_large_epsilon}, which holds for the example in Fig. \ref{Fig:MSR_large_epsilon_toy}.
Here we show that the proposed coding scheme satisfies two properties: 1) exact regeneration of any failed node and 2) recovery of $\mathcal{M}=4$ message symbols $\{m_{1,1}, m_{1,2}, m_{2,1}, m_{2,2}\}$ by contacting any $k=2$ nodes.

\begin{figure}[!t]
	\centering
	\includegraphics[width=80mm]{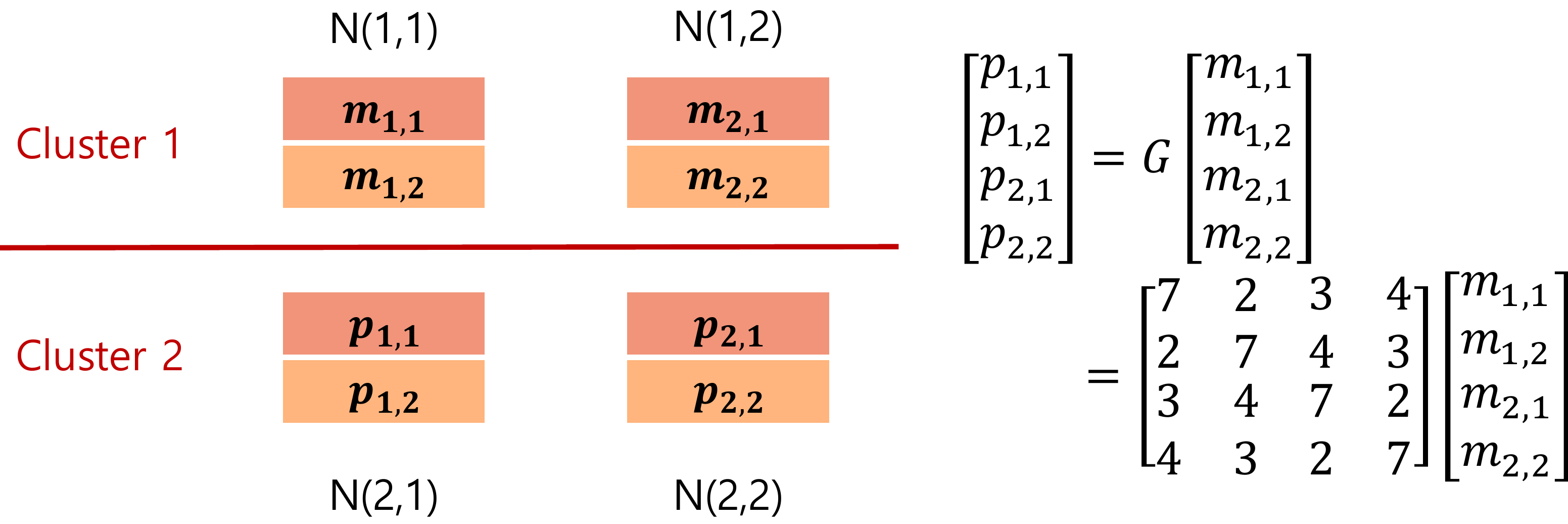}
	\caption{MSR example for $n=4,k=2,L=2$}
	\label{Fig:MSR_large_epsilon_toy}
\end{figure}

1) \textit{Exact regeneration}: Fig. \ref{Fig:MSR_large_epsilon_toy_repair} illustrates the regeneration process. Suppose that node $N(1,1)$ containing the message $\mathbf{m}_1=[m_{1,1}, m_{1,2}]$ fails. Then, node $N(1,2)$ transmits $\beta_I = 2$ symbols, $m_{2,1}$ and $m_{2,2}$. Nodes $N(2,1)$ and $N(2,2)$ transmit $\beta_c = 1$ symbol each, for example $p_{1,1}$ and $p_{2,2}$, respectively. Then, from the received symbols of $m_{2,1}, m_{2,2}, p_{1,1}, p_{2,2}$ and matrix $G$, we obtain
\begin{equation*}
\begin{bmatrix}
y_1 \\
y_2
\end{bmatrix}
\coloneqq
\begin{bmatrix}
p_{1,1}-G_{1,3}m_{2,1}-G_{1,4}m_{2,2} \\
p_{2,2}-G_{4,3}m_{2,1}-G_{4,4}m_{2,2} 
\end{bmatrix}
=
\begin{bmatrix}
7 & 2\\
4 & 3
\end{bmatrix}
\begin{bmatrix}
m_{1,1}\\
m_{1,2}
\end{bmatrix}.
\end{equation*}
Thus, the contents of the failed node can be regenerated by
\begin{equation*}
\begin{bmatrix}
m_{1,1}\\
m_{1,2}
\end{bmatrix}
=\begin{bmatrix}
7 & 2\\
4 & 3
\end{bmatrix}^{-1}
\begin{bmatrix}
y_1\\
y_2
\end{bmatrix}
=\begin{bmatrix}
3 & 2\\
4 & 7
\end{bmatrix}
\begin{bmatrix}
y_1\\
y_2
\end{bmatrix}
\end{equation*}
where the matrix inversion is over $GF(2^3)$.
Note that the exact regeneration property holds irrespective of the contents transmitted by $N(2,1)$ and $N(2,2)$, since the encoding matrix is a Cauchy matrix, all submatrices of which are invertible. 

2) \textit{Data recovery}: First, if DC contacts two systematic nodes, the proof is trivial. Second, contacting two parity nodes can recover the original message since $G$ is invertible. Third, suppose that DC contacts one systematic node and one parity node, for example, $N(1,1)$ and $N(1,4)$. Then, DC can retrieve message symbols $m_{1,1}, m_{1,2}$ and parity symbols $p_{2,1}, p_{2,2}$. Using the retrieved symbols and the information on the encoding matrix $G$, DC additionally obtains
\begin{equation*}
\begin{bmatrix}
z_1 \\
z_2
\end{bmatrix}
\coloneqq
\begin{bmatrix}
p_{2,1}-G_{3,1}m_{1,1}-G_{3,2}m_{1,2} \\
p_{2,2}-G_{4,1}m_{1,1}-G_{4,2}m_{1,2} 
\end{bmatrix}
=
\begin{bmatrix}
7 & 2\\
2 & 7
\end{bmatrix}
\begin{bmatrix}
m_{2,1}\\
m_{2,2}
\end{bmatrix}.
\end{equation*}
Thus, DC obtains 
\begin{equation*}
\begin{bmatrix}
m_{2,1}\\
m_{2,2}
\end{bmatrix}
=\begin{bmatrix}
7 & 2\\
2 & 7
\end{bmatrix}^{-1}
\begin{bmatrix}
z_1\\
z_2
\end{bmatrix}
=\begin{bmatrix}
1 & 3\\
3 & 1
\end{bmatrix}
\begin{bmatrix}
z_1\\
z_2
\end{bmatrix},
\end{equation*}
which completes the data recovery property of the suggested code.

\begin{figure}[!t]
	\centering
	\includegraphics[width=85mm]{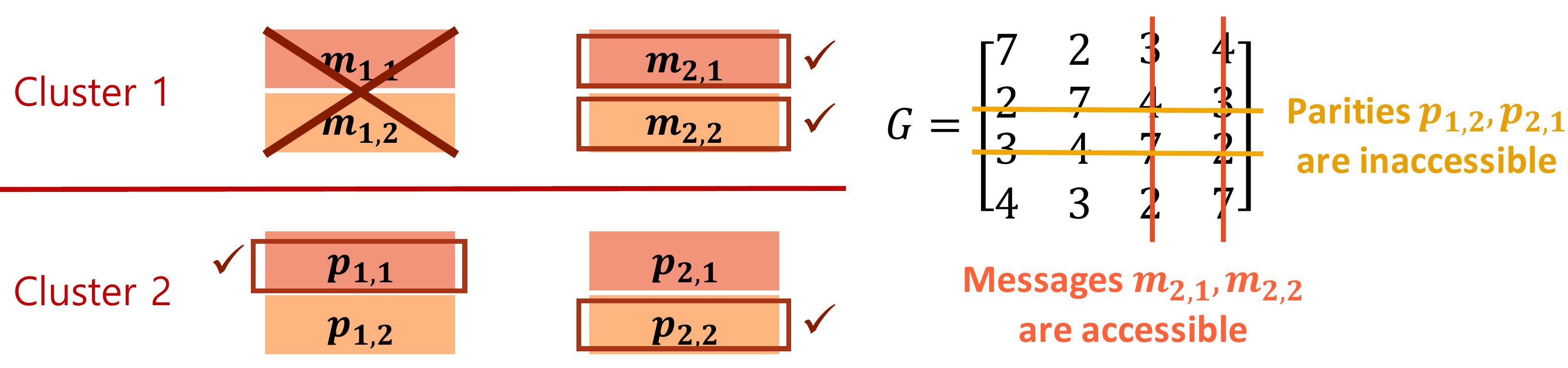}
	\caption{Repairing a failed node in proposed MSR code example for $n=4,k=2,L=2$}
	\label{Fig:MSR_large_epsilon_toy_repair}
\end{figure}

\section{Conclusion}
A class of MSR codes for clustered distributed storage modeled in \cite{sohn2016capacity} has been constructed. The proposed coding schemes can be applied in practical data centers with multiple racks, where the available cross-rack bandwidth is limited compared to the intra-rack bandwidth. 
Two important cases of $\epsilon=0$ and $\epsilon=1/(n-k)$ are considered, where $\epsilon=\beta_c/\beta_I$ represents the ratio of available cross-
to intra-cluster repair bandwidth. 
Under the constraint of zero cross-cluster repair bandwidth ($\epsilon=0$), appropriate application of two locally repairable codes suggested in \cite{papailiopoulos2014locally, tamo2016optimal} is shown to achieve the MSR point of clustered distributed storage. Moreover, an explicit MSR coding scheme is suggested for $\epsilon=1/(n-k)$, when the system parameters satisfy $n=2k$ and $L=2$. The proposed coding scheme can be implemented in a finite field, by using a Cauchy generator matrix.



\appendices

\numberwithin{equation}{section}

\section{Proof of Theorem \ref{Thm:LRC1_achieves_MSR}}\label{Section:proof_of_LRC1_achieves_MSR}

We focus on code $\mathds{C}$, the explicit ($n,r,d,\mathcal{M},\alpha$)-LRC constructed in Section V of \cite{papailiopoulos2014locally}. This code has the parameters
\begin{equation}\label{Eqn:LRC1_param}
(n,r,d=n-k+1,\mathcal{M},\alpha=\frac{r+1}{r}\frac{\mathcal{M}}{k}),
\end{equation}
where $r$ is the repair locality and $d$ is the minimum distance, and other parameters ($n,\mathcal{M},\alpha$) have physical meanings identical to those in the present paper.
By setting $r=n_I-1$, the code has node capacity of
\begin{equation}\label{Eqn:alpha_MSR_small_epsilon}
\alpha = \frac{n_I}{n_I-1}\frac{\mathcal{M}}{k} = \frac{\mathcal{M}}{k(1-1/n_I)}=\frac{\mathcal{M}}{k-q}
\end{equation}
where the last equality holds from the $n_I \divides k$ condition and the definition of $q$ in (\ref{Eqn:quotient}).

We first prove that any node failure can be exactly regenerated by using the system parameters in \eqref{Eqn:parameters_for_small_epsilon}.
According to the description in Section V-B of \cite{papailiopoulos2014locally}, \textit{any} node is contained in a unique corresponding repair group of size $r+1=n_I$, so that a failed node can be exactly repaired by contacting $r=n_I-1$ other nodes in the same repair group. This implies that a failed node does not need to contact other repair groups in the exact regeneration process. 
By setting each repair group as a cluster (note that each cluster contains $n_I=n/L$ nodes), we can achieve 
\begin{equation}\label{Eqn:betac_MSR_small_epsilon}
\beta_c = 0.
\end{equation}
Moreover, Section V-B of \cite{papailiopoulos2014locally} illustrates that the exact regeneration of a failed node is possible by contacting the \textit{entire} symbols contained in $r=n_I-1$ nodes in the same repair group, and applying the XOR operation. This implies $\beta_I = \alpha$, which result in
\begin{equation}\label{Eqn:betai_MSR_small_epsilon}
\gamma = (n_I-1)\beta_I = (n_I-1) \frac{\mathcal{M}}{k-q},
\end{equation}
combined with \eqref{Eqn:gamma} and \eqref{Eqn:alpha_MSR_small_epsilon}.
From (\ref{Eqn:alpha_MSR_small_epsilon}) and (\ref{Eqn:betai_MSR_small_epsilon}), we can conclude that code $\mathds{C}$ satisfies the exact regeneration of any failed node using the parameters in \eqref{Eqn:parameters_for_small_epsilon}.

Now we prove that contacting any $k$ nodes suffices to recover original data in the clustered DSS with code $\mathds{C}$ applied. Note that the minimum distance is $d=n-k+1$ from (\ref{Eqn:LRC1_param}). Thus, the information from $k$ nodes suffices to pick the correct codeword. This completes the proof of Theorem \ref{Thm:LRC1_achieves_MSR}.

\section{Proof of Theorem \ref{Thm:LRC2_achieves_MSR}}\label{Section:proof_of_LRC2_achieves_MSR}

We first prove that the code $\mathds{C}$ has minimum distance of $d=n-k+1$, which implies that the original file of size $\mathcal{M}=k-q$ can be recovered by contacting arbitrary $k$ nodes. Second, we prove that any failed node can be exactly regenerated under the setting of (\ref{Eqn:parameters_for_small_epsilon}).
Recall that the $[n_0,k_0,r_0]-$LRC constructed in \cite{tamo2016optimal} has the following property, as stated in Theorem 1 of \cite{tamo2016optimal}:

\begin{lemma}[Theorem 1 of \cite{tamo2016optimal}]\label{Lemma:Result_of_Tamo}
	The code constructed in \cite{tamo2016optimal} has locality $r_0$ and optimal minimum distance $d=n_0-k_0-\ceil{\frac{k_0}{r}}+2$, when $(r_0+1)\divides n_0$.
\end{lemma}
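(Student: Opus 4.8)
The plan is to unpack the explicit construction of \cite{tamo2016optimal} and verify the two asserted properties directly. Work over a finite field $\mathbb{F}_q$ large enough to contain $n_0$ distinct evaluation points, and---using the hypothesis $(r_0+1)\divides n_0$---partition these points into $n_0/(r_0+1)$ groups $A_1,\dots,A_{n_0/(r_0+1)}$, each of size $r_0+1$. The construction relies on a \emph{good polynomial} $g(x)\in\mathbb{F}_q[x]$ of degree $r_0+1$ that is constant on each group $A_s$; such a $g$ exists in the parameter regime considered by \cite{tamo2016optimal} (for instance, when $(r_0+1)\divides(q-1)$ one may take $g(x)=x^{r_0+1}$, which is constant on each coset of the order-$(r_0+1)$ subgroup of $\mathbb{F}_q^{\times}$). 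The code $\mathds{C}$ is then the evaluation code, at the $n_0$ points, of the $k_0$-dimensional space of polynomials $f(x)=\sum_{i}\sum_{j=0}^{r_0-1}a_{i,j}\,g(x)^i x^j$, where exactly $k_0$ of the coefficients $a_{i,j}$ are used: the full range $0\le i\le\lceil k_0/r_0\rceil-1$ when $r_0\divides k_0$, and a truncated final block otherwise.

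For the locality claim, fix a group $A_s$ and let $c_s$ denote the constant value of $g$ on $A_s$. Restricted to $A_s$, every codeword polynomial $f$ agrees with $h_s(x)\coloneqq\sum_{j=0}^{r_0-1}\bigl(\sum_i a_{i,j}\,c_s^{\,i}\bigr)x^j$, which has degree at most $r_0-1$. Since $|A_s|=r_0+1>r_0-1$, any $r_0$ of the coordinates indexed by $A_s$ determine $h_s$ by interpolation, hence determine the one remaining coordinate of $A_s$; thus every symbol is recoverable from $r_0$ others, i.e.\ $\mathds{C}$ has locality $r_0$.

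For the distance, observe first that the monomials $g(x)^i x^j$ occurring in $f$ have pairwise distinct degrees, because $\deg g=r_0+1>r_0-1\ge j$; hence distinct coefficient vectors yield distinct polynomials, of degree at most $(r_0+1)(\lceil k_0/r_0\rceil-1)+(r_0-1)=k_0+\lceil k_0/r_0\rceil-2$, and evaluation at $n_0>\deg f$ points is injective, so $\dim\mathds{C}=k_0$. A nonzero $f$ has at most $\deg f$ zeros, so every nonzero codeword has Hamming weight at least $n_0-k_0-\lceil k_0/r_0\rceil+2$. This meets the Singleton-type upper bound for locally repairable codes, $d\le n_0-k_0-\lceil k_0/r_0\rceil+2$ (the same LRC distance bound invoked elsewhere in this paper via \cite{papailiopoulos2014locally}); therefore the minimum distance equals $n_0-k_0-\lceil k_0/r_0\rceil+2$, which is optimal.

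The main obstacle is the construction and existence of the good polynomial $g$ for the prescribed parameters, together with the careful choice of the monomial set in the non-divisible case $r_0\notdivides k_0$ so that the degree bound $k_0+\lceil k_0/r_0\rceil-2$ is still attained exactly; the field-size conditions that guarantee enough evaluation points and a suitable subgroup are precisely the hypotheses under which \cite{tamo2016optimal} states the theorem, so here they are simply imported.
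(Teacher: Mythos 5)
The paper gives no proof of this lemma at all: it is stated purely as a citation of Theorem~1 of \cite{tamo2016optimal} and invoked as a black box inside the proof of Theorem~\ref{Thm:LRC2_achieves_MSR}, so there is no ``paper's proof'' to compare against. What you have written is a self-contained reconstruction of the Tamo--Barg argument, and it is essentially the standard one and correct: partition the $n_0$ evaluation points into $n_0/(r_0+1)$ groups of size $r_0+1$, take a good polynomial $g$ of degree $r_0+1$ constant on each group, observe that each codeword polynomial restricts on a group to a polynomial of degree at most $r_0-1$ (so any $r_0$ coordinates of the group determine the last, giving locality $r_0$), and bound the overall degree of the codeword polynomial by $k_0+\ceil{k_0/r_0}-2$ so that nonzero codewords have weight at least $n_0-k_0-\ceil{k_0/r_0}+2$, matching the Singleton-type LRC upper bound and hence optimal. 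The one point you import rather than prove---and you are right to flag it---is the $r_0 \notdivides k_0$ case, where the choice of monomials in the last block needs care in \cite{tamo2016optimal} to keep the evaluation map injective and the degree bound tight; since the lemma is itself a citation, leaning on that reference is acceptable, but a fully self-contained proof would have to spell that block out. A small attribution quibble: the LRC bound $d\le n_0-k_0-\ceil{k_0/r_0}+2$ is due to Gopalan, Huang, Simitci and Yekhanin rather than to \cite{papailiopoulos2014locally}, which proves a vector-code analogue.
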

Note that we consider code $\mathds{C}$ of optimal $[n_0,k_0,r_0]=[n,k-q,n_I-1]-$LRC. Since $r_0+1=n_I$ divides $n_0=n$, Lemma \ref{Lemma:Result_of_Tamo} can be applied. 
The result of Lemma \ref{Lemma:Result_of_Tamo} implies that the minimum distance of $\mathds{C}$ is
\begin{align}\label{Eqn:dmin_tamo}
d &= n-(k-q)-\left\lceil\dfrac{k-q}{n_I-1}\right\rceil+2.
\end{align}
Since we consider the $n_I \notdivides k$ case, we have
\begin{equation}\label{Eqn:k_case2}
k = qn_I + m, \quad \quad (0 < m \leq n_I-1)
\end{equation}
from (\ref{Eqn:remainder}). 
Inserting (\ref{Eqn:k_case2}) into (\ref{Eqn:dmin_tamo}), we have
\begin{align}\label{dmin_case2}
d &= n-(k-q)-\left\lceil\dfrac{(n_I-1)q+m}{n_I-1}\right\rceil+2 \nonumber\\
&= n-(k-q)-(q+1)+2 = n-k+1,
\end{align}
where the second last equality holds since $0 < m \leq n_I-1$ from (\ref{Eqn:k_case2}). Thus, this proves that contacting arbitrary $k$ nodes suffices to recover the original source file.

\begin{figure}[!t]
	\centering
	\includegraphics[width=90mm]{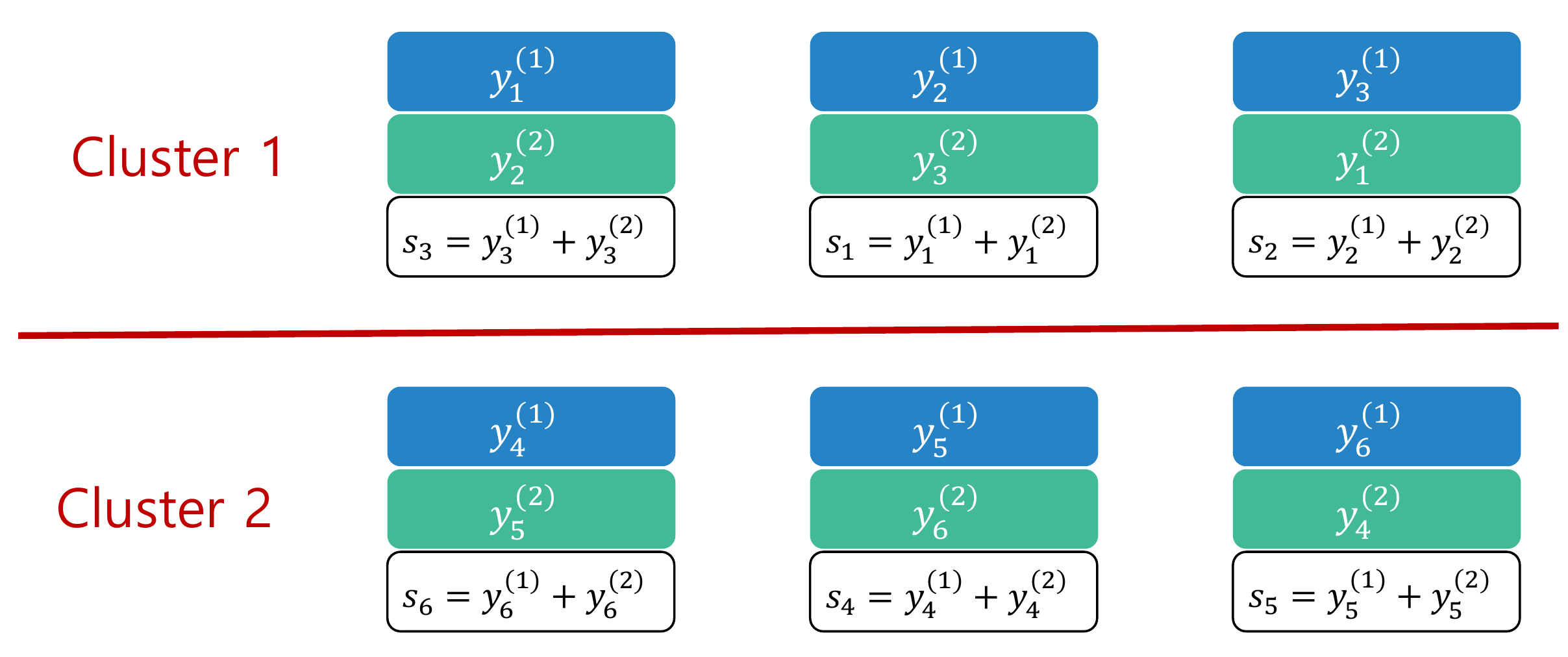}
	\caption{Code construction for $\epsilon=0, n_I \notdivides k$ case}
	\label{Fig:Epsilon_nondivisible}
\end{figure}

Now, all we need to prove is that any failed node can be exactly regenerated under the setting of system parameters specified in Proposition \ref{Prop:parameter_for_small_epsilon}.
According to the rule illustrated in \cite{tamo2016optimal}, the construction of code $\mathds{C}$ can be shown as in Fig. \ref{Fig:Epsilon_nondivisible}.
First, we have $\mathcal{M}=k-q$ source symbols $\{x_i\}_{i=1}^{k-q}$ to store reliably. By applying a $(T,k-q)$ Reed-Solomon code to the source symbols, we obtain $\{z_i\}_{i=1}^{t}$ where $T \coloneqq L(n_I-1)$. Then, we partition $\{z_i\}_{i=1}^T$ symbols into $L$ groups, where each group contains $(n_I-1)$ symbols. Next, each group of $\{z_i\}$ symbols is encoded by an $(n_I, n_I-1)-$MDS code, which result in a group of $n_I$ symbols of $\{y_i\}$. Finally, we store symbol $y_{n_I(l-1)+j}$ in node $N(l,j)$. By this allocation rule, $y_i$ symbols in the same group are located in the same cluster.

Assume that $N(l,j)$, the $j^\text{th}$ node at $l^\text{th}$ cluster,  containing $y_{n_I(l-1)+j}$ symbol fails
for $l \in [L]$ and $j \in [n_I]$.
From Fig. \ref{Fig:Epsilon_nondivisible}, we know that $(n_I-1)$ symbols of $\{y_{n_I(l-1)+s}\}_{s=1, s\neq j}^{n_I}$ stored in $l^{th}$ cluster can decode the $(n_I, n_I-1)-$MDS code for group $l$. Thus, the contents of $y_{n_I(l-1)+j}$ can be recovered by retrieving symbols from nodes in the the $l^{th}$ cluster (i.e., the same cluster where the failed node is in). This proves the ability of exactly regenerating an arbitrary failed node. 
The regeneration process satisfies
\begin{equation}\label{Eqn:Epsilon0_Regeneration}
\beta_c = 0, \beta_I = \alpha.
\end{equation}
Moreover, note that the code in Fig. \ref{Fig:Epsilon_nondivisible} has
\begin{equation}\label{Eqn:Epsilon0_capacity}
\mathcal{M} = (k-q)\alpha
\end{equation}
source symbols. 
Since parameters obtained in (\ref{Eqn:Epsilon0_Regeneration}) and (\ref{Eqn:Epsilon0_capacity}) are consistent with Proposition \ref{Prop:parameter_for_small_epsilon}, we can confirm that code $\mathds{C}$ is a valid MSR point under the conditions $\epsilon = 0$ and $n_I \notdivides k$.

\section{Proof of Theorem \ref{Thm:MSR_code_for_2k_k_2_1/k}}\label{Section:proof_MSR_code_for_2k_k_2_1/k}

\begin{figure}[!t]
	\centering
	\includegraphics[width=90mm]{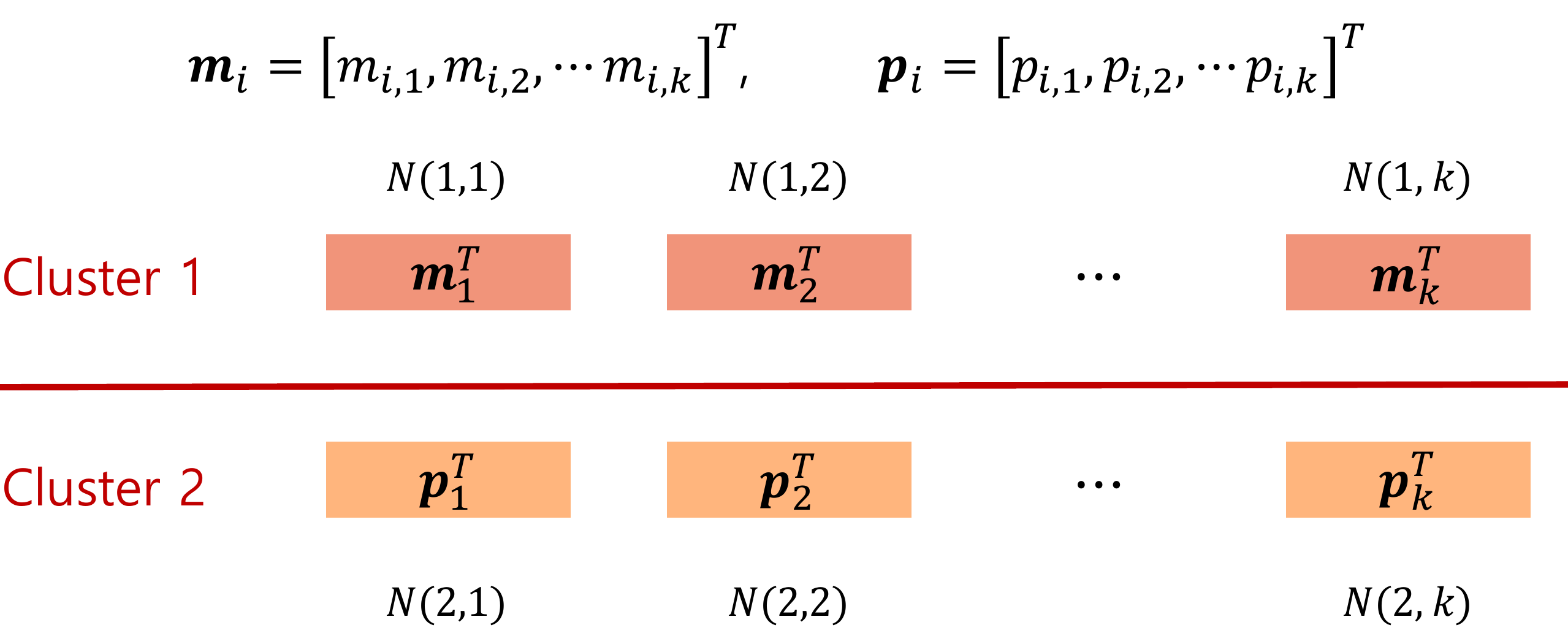}
	\caption{Code construction for $[n,k,L]=[2k,k,2]-$clustered DSS when $\epsilon=1/(n-k)$ }
	\label{Fig:proof_thm3}
\end{figure}

Recall that the code designed by Construction $\ref{Construct:MSR_for_2k_k_2}$ allocates systematic nodes at $1^{st}$ cluster and parity nodes at $2^{nd}$ cluster, as illustrated in Fig. \ref{Fig:proof_thm3}.
Moreover, recall that the system parameters for $[n,k,L]=[2k,k,2]-$DSS with $\epsilon=1/(n-k)$ are
\begin{align}
\alpha &=\beta_I = k, \quad 
\beta_c = 1,
\end{align}
from Proposition \ref{Prop:parameter_for_large_epsilon} and the definition of $\epsilon=\beta_c/\beta_I$.
First, we show that exact regeneration of systematic nodes (in the first cluster) is possible using $\beta_I = k, \beta_c = 1$ in the $[n,k,L]=[2k,k,2]$ DSS with Construction \ref{Construct:MSR_for_2k_k_2}.
We use the concept of the \textit{projection vector} to illustrate the repair process. 
For $l \in [L]$, let $\mathbf{v}_{i,j}^{(l)}$ be the $l^{th}$ projection vector assigned for 
$N(1,j)$, in repairing $N(1,i)$. Similarly, let  $\mathbf{v}_{i,j}$ be the projection vector assigned for 
$N(2,j)$, in repairing $N(1,i)$. 
Assume that the node $N(1,i)$ containing $\mathbf{m}_i = [m_{i,1}, m_{i,2}, \cdots, m_{i,k}]^T $ fails. Then, node $N(1,j)$ transmits $\beta_I = k$ symbols $\{\mathbf{m}_j^T \mathbf{v}_{i,j}^{(l)}\}_{l=1}^k$, while node $N(2,j)$ transmits $\beta_c = 1$ symbol $\mathbf{p}_j^T \mathbf{v}_{i,j}$. For simplicity, we set 
$\mathbf{v}_{i,j}^{(l)}=\mathbf{e}_l$ and $\mathbf{v}_{i,j}=\mathbf{e}_k$, where $\mathbf{e}_i$ is the $k$-dimensional standard basis vector with a $1$ in the $i^{th}$ coordinate and $0's$ elsewhere.
This means that node $N(1,j)$ transmits $k$ symbols $\mathbf{m}_j=[m_{j,1},m_{j,2},\cdots,m_{j,k}]^T$ it contains, while $N(2,j)$ transmits the last symbol it contains, i.e., the symbol $p_{j,k}$.
Thus, the newcomer node for regenerating systematic node $N(1,i)$ obtains the following information
\begin{equation}
M_i \coloneqq \{m_{j,s}: j \in [k] \setminus \{i\}, s \in [k] \} \cup \{p_{j,k}\}_{j=1}^k.
\end{equation}
We now show how the newcomer node regenerates $\mathbf{m}_i = [m_{i,1}, m_{i,2}, \cdots, m_{i,k}]^T $ using information $M_i$.
Recall that the parity symbols and message symbols are related as in the following $k^2$ equations:
\begin{equation}\label{Eqn:parity_message_relation}
\begin{bmatrix}
\mathbf{p}_1 \\
\mathbf{p}_2\\
\vdots \\
\mathbf{p}_k
\end{bmatrix}
= G 
\begin{bmatrix}
\mathbf{m}_1 \\
\mathbf{m}_2\\
\vdots \\
\mathbf{m}_k
\end{bmatrix}
\end{equation}
obtained from (\ref{Eqn:ENC_MAT}) and (\ref{Eqn:parity}).
Among these $k^2$ parity symbols, $k$ parity symbols received by the newcomer node can be expressed as
\begin{equation}\label{Eqn:parity_k}
\begin{bmatrix}
p_{1,k}\\
p_{2,k}\\
\vdots\\
p_{k,k}
\end{bmatrix}
=
\begin{bmatrix}
G_{k,1} & G_{k,2} & \cdots & G_{k,k^2}\\
G_{2k,1} & G_{2k,2} & \cdots & G_{2k,k^2}\\
\vdots & \vdots & \ddots & \vdots \\
G_{k^2,1} & G_{k^2,2} & \cdots & G_{k^2,k^2}
\end{bmatrix}
\begin{bmatrix}
m_{1,1}\\
m_{1,2}\\
\vdots\\
m_{k,k}
\end{bmatrix},
\end{equation}
where the matrix in \eqref{Eqn:parity_k} is generated by removing $k(k-1)$ rows from $G$.
Since we are aware of $k(k-1)$ message symbols of $\{m_{j,s}: j \in [k] \setminus \{i\}, s \in [k] \}$ and the entries of $G$ matrix, subtracting the constant known values from (\ref{Eqn:parity_k}) results in
\begin{equation}\label{Eqn:parity_k_reduced}
\begin{bmatrix}
y_1\\
y_2\\
\vdots\\
y_k
\end{bmatrix}
=
\begin{bmatrix}
G_{k,(i-1)k+1} & G_{k,(i-1)k+2} & \cdots & G_{k,ik}\\
G_{2k,(i-1)k+1} & G_{2k,(i-1)k+2} & \cdots & G_{2k,ik}\\
\vdots & \vdots & \ddots & \vdots \\
G_{k^2,(i-1)k+1} & G_{k^2,(i-1)k+2} & \cdots & G_{k^2,ik}
\end{bmatrix}
\begin{bmatrix}
m_{i,1}\\
m_{i,2}\\
\vdots\\
m_{i,k}
\end{bmatrix}
\end{equation}
where
\begin{equation}
y_l \coloneqq p_{l,k} - \sum_{j=1 \\ j\neq i}^{k}\sum_{s=1}^{k} G_{lk,(j-1)k+s}m_{j,s}
\end{equation}
for $l \in [k]$.
Note that the matrix in \eqref{Eqn:parity_k_reduced} can be obtained by removing $k(k-1)$ columns from the matrix in \eqref{Eqn:parity_k}.
Since every square sub-matrix of $G$ is invertible, we can obtain $\mathbf{m}_i = [m_{i,1}, m_{i,2}, \cdots, m_{i,k}]^T$, which completes the proof for exactly regenerating the failed systematic node.

Second, we prove that exact regeneration of the parity nodes (in the second cluster) is possible. Let 
$\mathbf{\omega}_{i,j}^{(l)}$ be the $l^{th}$ projection vector assigned for 
$N(2,j)$ in repairing $N(2,i)$. Similarly, let  $\mathbf{\omega}_{i,j}$ be the projection vector assigned for 
$N(1,j)$ in repairing $N(2,i)$. 
Assume that the parity node $N(2,i)$ fails, which contains $\mathbf{p}_i=[p_{i,1},p_{i,2},\cdots,p_{i,k}]^T$.
Then, node $N(2,j)$ transmits $\beta_I = k$ symbols $\{\mathbf{p}_j^T \mathbf{\omega}_{i,j}^{(l)}\}_{l=1}^k$, while node $N(1,j)$ transmits $\beta_c = 1$ symbol $\mathbf{m}_j^T \mathbf{\omega}_{i,j}$. For simplicity, we set 
$\mathbf{\omega}_{i,j}^{(l)}=\mathbf{e}_l$ and $\mathbf{\omega}_{i,j}=\mathbf{e}_k$.
This means that node $N(2,j)$ transmits $k$ symbols $\mathbf{p}_j=[p_{j,1},p_{j,2},\cdots,p_{j,k}]^T$ it contains, while $N(1,j)$ transmits the last symbol it contains, i.e., the symbol $m_{j,k}$.
Thus, the newcomer node for regenerating parity node $N(2,i)$ obtains the following information
\begin{equation}
P_i \coloneqq \{p_{j,s}: j \in [k] \setminus \{i\}, s \in [k] \} \cup \{m_{j,k}\}_{j=1}^k 
\end{equation}
We show how the newcomer node regenerates $\mathbf{p}_i = [p_{i,1}, p_{i,2}, \cdots, p_{i,k}]^T$ using the information $P_i$.
Among $k^2$ parity symbols in (\ref{Eqn:parity_message_relation}),
$k(k-1)$ parity symbols received by the newcomer node can be expressed as
\begin{equation}\label{Eqn:parity_k^2}
\begin{bmatrix}
\mathbf{p}_1\\
\vdots\\
\mathbf{p}_{i-1}\\
\mathbf{p}_{i+1}\\
\vdots \\
\mathbf{p}_{k}
\end{bmatrix}
=
\begin{bmatrix}
G_1^{(1)} & \cdots & G_1^{(k)}\\
\vdots & \ddots & \vdots \\
G_{i-1}^{(1)} & \cdots & G_{i-1}^{(k)}\\
G_{i+1}^{(1)} & \cdots & G_{i+1}^{(k)}\\
\vdots & \ddots & \vdots \\
G_{k}^{(1)} & \cdots & G_{k}^{(k)}\\
\end{bmatrix}
\begin{bmatrix}
\mathbf{m}_1\\
\mathbf{m}_2\\
\vdots\\
\mathbf{m}_k
\end{bmatrix}
= G'\mathbf{m},
\end{equation}
where $G_i^{(j)}$ is defined in Construction \ref{Construct:MSR_for_2k_k_2}.
Note that $G'$ is a $k(k-1) \times k^2$ matrix, which is generated by removing $l^{th}$ rows from $G$, for $l \in \{(i-1)k+1, (i-1)k+2, \cdots, ik\}$.
Since we know the values of $k$ message symbols $\{m_{j,k}\}_{j=1}^k$ and the entries of $G$ matrix, subtracting constant known values from (\ref{Eqn:parity_k^2}) results in
\begin{equation}\label{Eqn:parity_k^2_reduced}
G'' \mathbf{m}',
\end{equation}
where $G''$ is generated by removing $l^{th}$ columns from $G'$ for $l \in \{k, 2k, \cdots, k^2\}$. Similarly, $\mathbf{m}'$ is generated by removing $l^{th}$ rows from $\mathbf{m}$ for $l \in \{k, 2k, \cdots, k^2\}$. 
Thus, $G''$ is an invertible $k(k-1) \times k(k-1)$ matrix, so that 
we can obtain $\mathbf{m}'$, which contains 
\begin{equation}
\tilde{P}_i = \{m_{j,s}: j \in [k], s \in [k-1] \}.
\end{equation}
Since $P_i \cup \tilde{P}_i$ contains every message symbol $\{m_{j,s}: j,s\in [k] \}$, we can regenerate $\mathbf{p}_i = [p_{i,1}, p_{i,2}, \cdots, p_{i,k}]^T$ using (\ref{Eqn:parity_message_relation}). This completes the proof for exactly regenerating the failed parity node.

Finally, we prove that $\mathcal{M}=k^2$ message symbols can be obtained by contacting arbitrary $k$ nodes. 
In this proof, we use a slightly modified notation for representing message and parity symbols. For $j, s \in [k]$, the message symbol $m_{j,s}$ and the parity symbol $p_{j,s}$ are denoted as $m_{(j-1)k+s}$ and $p_{(j-1)k+s}$, respectively.
Then, (\ref{Eqn:parity_message_relation}) is expressed as
\begin{equation}\label{Eqn:parity_message_relation_revised}
\begin{bmatrix}
p_1 \\
p_2\\
\vdots \\
p_{k^2}
\end{bmatrix}
= G 
\begin{bmatrix}
m_1 \\
m_2\\
\vdots \\
m_{k^2}
\end{bmatrix}
\end{equation}
Suppose that the data collector (DC) contacts $e$ nodes from the $1^{st}$ cluster, and $k-e$ nodes from the $2^{nd}$ cluster, for $e \in \{0, 1, \cdots, k\}$.
Then, DC obtains $k(k-e)$ parity symbols and $ke$ message symbols. 
Since there exists total of $\mathcal{M}=k^2$ message symbols, the number of message symbols that DC cannot obtain is $\mathcal{M}-ke =k(k-e)$.
Let the parity symbols obtained by DC be $p_{i_1}, \cdots, p_{i_{k(k-e)}}$, and the message symbols not obtained by DC be $m_{j_1}, \cdots, m_{j_{k(k-e)}}$. Then, the known parities can be expressed as
\begin{equation}\label{Eqn:parity_DC}
\begin{bmatrix}
p_{i_1}\\
p_{i_2}\\
\vdots\\
p_{i_{k(k-e)}}
\end{bmatrix}
=
G'
\begin{bmatrix}
m_{1,1}\\
m_{1,2}\\
\vdots\\
m_{k,k}
\end{bmatrix},
\end{equation}
where $G'$ is a $k(k-e) \times k^2$ matrix obtained by taking $l^{th}$ rows from $G$, for $l \in \{i_t\}_{t=1}^{k(k-e)}$. Since we know $ke$ message symbols and the elements of $G$, subtracting the known constant values from (\ref{Eqn:parity_DC}) results in
\begin{equation}\label{Eqn:parity_DC_reduced}
G''
\begin{bmatrix}
m_{j_1}\\
m_{j_2}\\
\vdots\\
m_{j_{k(k-e)}}
\end{bmatrix},
\end{equation}
where $G''$ is a $k(k-e) \times k(k-e)$ matrix generated by taking the $l^{th}$ columns from $G'$ for $l \in \{j_t\}_{t=1}^{k(k-e)}$.
Since $G''$ is invertible, we obtain the unknown message symbols $\{m_{j_i}\}_{i=1}^{k(k-e)}$. This completes the proof.

\section{Proof of Propositions}
\subsection{Proof of Proposition \ref{Prop:parameter_for_small_epsilon}}\label{Section:proof_of_prop_param_small_epsilon}
From Corollary 3 of \cite{sohn2017TIT}, the MSR point for $\epsilon = 0$ is given by
\begin{equation}\label{Eqn:MSR_point_review}
(\alpha, \gamma) = 
(\frac{\mathcal{M}}{\zeta_{n_I-2}}, \frac{\mathcal{M}}{\lambda_{n_I-2}} (1- \frac{\delta_{n_I-2}}{\zeta_{n_I-2}})),
\end{equation}
where $\{\zeta_i\}, \{\lambda_i\}, \{\delta_i\}$ are defined in \cite{sohn2017TIT}. This paper does not review the explicit form of the definitions, but shows how $(\alpha,\gamma)$ looks like. 
From the proof of Lemma 5 of \cite{sohn2017TIT}, 
we have
\begin{enumerate}
	\item if $m=n_I-1$
	\begin{align}
	\lambda_{n_I-2} &= \frac{q+1}{n_I-1} \label{Eqn:lambda_divisible}\\
	\delta_{n_I-2} &= (q+1)(n_I-2) \label{Eqn:delta_divisible}\\
	\zeta_{n_I-2} &= k-q \label{Eqn:zeta_divisible}
	\end{align}
	\item else ($m=0,1,\cdots, n_I-2$)
	\begin{align}
	\lambda_{n_I-2} &= \frac{q}{n_I-1} \label{Eqn:lambda_nondivisible}\\
	\delta_{n_I-2} &= k-2q \label{Eqn:delta_nondivisible}\\
	\zeta_{n_I-2} &= k-q \label{Eqn:zeta_nondivisible}
	\end{align}
\end{enumerate}
where $q$ and $m$ are defined in (\ref{Eqn:remainder}) and (\ref{Eqn:quotient}).
When $m=n_I-1$, (\ref{Eqn:delta_divisible}) can be expressed as
\begin{align}\label{Eqn:delta_divisible_review}
\delta_{n_I-2}&=qn_I-2q+n_I-2 = qn_I-2q+m-1 \nonumber\\
&= k-2q-1,
\end{align}
where the last equality is from (\ref{Eqn:remainder}). Thus, from (\ref{Eqn:delta_divisible_review}), (\ref{Eqn:zeta_divisible}) and (\ref{Eqn:lambda_divisible}), we have
\begin{equation}\label{Eqn:relationship_MSR}
\frac{\zeta_{n_I-2}-\delta_{n_I-2}}{\lambda_{n_I-2}} = n_I-1
\end{equation}
holds for the $m=n_I-1$ case. Similarly, using (\ref{Eqn:lambda_nondivisible}), (\ref{Eqn:delta_nondivisible}) and (\ref{Eqn:zeta_nondivisible}), we can confirm that (\ref{Eqn:relationship_MSR}) holds for the $0 \leq m \leq n_I-2$ case. Inserting (\ref{Eqn:zeta_divisible}), (\ref{Eqn:zeta_nondivisible}), (\ref{Eqn:relationship_MSR}) into (\ref{Eqn:MSR_point_review}), we obtain
\begin{align}
\alpha &= \frac{\mathcal{M}}{k-q}, \\
\gamma &= \frac{\mathcal{M}}{k-q} (n_I-1)
\end{align}
Since $\gamma = (n_I-1)\beta_I$ for $\beta_c = 0$ from (\ref{Eqn:gamma}), we obtain 
\begin{equation}
\beta_I = \alpha = \mathcal{M}/(k-q),
\end{equation}
which completes the proof.

\subsection{Proof of Proposition \ref{Prop:parameter_for_large_epsilon}}\label{Section:proof_of_prop_param_large_epsilon}

We consider the $\beta_c = 1$ case without losing generality. This implies that
\begin{equation}\label{Eqn:beta_I_large_epsilon}
\beta_I = 1/\epsilon=n-k
\end{equation}
according to the definition $\epsilon=\beta_c/\beta_I$. Now, we observe the expressions for $\alpha$ and $\mathcal{M}$.
From Corollary 3 of \cite{sohn2017TIT}, the MSR point for $\epsilon=1/(n-k)$ is illustrated as
\begin{equation}\label{Eqn:MSR_point_review2}
(\alpha, \gamma) = (\frac{\mathcal{M}}{k}, \frac{\mathcal{M}}{k}\frac{1}{s_{k-1}}),
\end{equation}
where 
\begin{align}\label{Eqn:s_seq}
s_{k-1} &= \frac{(n-k)\epsilon}{(n_I-1) + \epsilon (n-n_I)} = \frac{1}{n_I-1+\frac{n-n_I}{n-k}}
\end{align}
from the definition of $\{s_i\}$ in \cite{sohn2017TIT} and the setting of $\epsilon=1/(n-k)$.
Combining (\ref{Eqn:MSR_point_review2}) and (\ref{Eqn:s_seq}) result in (\ref{Eqn:parameters_for_large_epsilon}).

Note that $\gamma$ in (\ref{Eqn:gamma}) can be expressed as
\begin{align}
\gamma &= (n-n_I)\beta_c + (n_I-1)\beta_I = (n-n_I) + (n_I-1)(n-k) , \label{Eqn:gamma_epsilon}
\end{align}
where the last equality holds due to \eqref{Eqn:beta_I_large_epsilon}.
Combining \eqref{Eqn:parameters_for_large_epsilon} and (\ref{Eqn:gamma_epsilon}), we obtain
\begin{equation*}
\gamma = \frac{\mathcal{M}}{k}\frac{\gamma}{n-k},
\end{equation*}
which result in 
\begin{equation}\label{Eqn:Capacity_MSR_point}
\mathcal{M} = k(n-k).
\end{equation}
Using $\alpha=\mathcal{M}/k$ in (\ref{Eqn:MSR_point_review2}),
we have
\begin{equation}
\alpha = n-k.
\end{equation}
This completes the proof.



\bibliographystyle{IEEEtran}
\bibliography{IEEEabrv,IEEE_ISIT2018}

\begin{thebibliography}{10}
\providecommand{\url}[1]{#1}
\csname url@samestyle\endcsname
\providecommand{\newblock}{\relax}
\providecommand{\bibinfo}[2]{#2}
\providecommand{\BIBentrySTDinterwordspacing}{\spaceskip=0pt\relax}
\providecommand{\BIBentryALTinterwordstretchfactor}{4}
\providecommand{\BIBentryALTinterwordspacing}{\spaceskip=\fontdimen2\font plus
\BIBentryALTinterwordstretchfactor\fontdimen3\font minus
  \fontdimen4\font\relax}
\providecommand{\BIBforeignlanguage}[2]{{%
\expandafter\ifx\csname l@#1\endcsname\relax
\typeout{** WARNING: IEEEtran.bst: No hyphenation pattern has been}%
\typeout{** loaded for the language `#1'. Using the pattern for}%
\typeout{** the default language instead.}%
\else
\language=\csname l@#1\endcsname
\fi
#2}}
\providecommand{\BIBdecl}{\relax}
\BIBdecl

\bibitem{dimakis2010network}
A.~G. Dimakis, P.~B. Godfrey, Y.~Wu, M.~J. Wainwright, and K.~Ramchandran,
  ``Network coding for distributed storage systems,'' \emph{IEEE Transactions
  on Information Theory}, vol.~56, no.~9, pp. 4539--4551, 2010.

\bibitem{rashmi2009explicit}
K.~Rashmi, N.~B. Shah, P.~V. Kumar, and K.~Ramchandran, ``Explicit construction
  of optimal exact regenerating codes for distributed storage,'' in
  \emph{Communication, Control, and Computing, 2009. Allerton 2009. 47th Annual
  Allerton Conference on}.\hskip 1em plus 0.5em minus 0.4em\relax IEEE, 2009,
  pp. 1243--1249.

\bibitem{cadambe2013asymptotic}
V.~R. Cadambe, S.~A. Jafar, H.~Maleki, K.~Ramchandran, and C.~Suh, ``Asymptotic
  interference alignment for optimal repair of mds codes in distributed
  storage,'' \emph{IEEE Transactions on Information Theory}, vol.~59, no.~5,
  pp. 2974--2987, 2013.

\bibitem{ernvall2014codes}
T.~Ernvall, ``Codes between mbr and msr points with exact repair property,''
  \emph{IEEE Transactions on Information Theory}, vol.~60, no.~11, pp.
  6993--7005, 2014.

\bibitem{sohn2016capacity}
J.~y.~Sohn, B.~Choi, S.~W. Yoon, and J.~Moon, ``Capacity of clustered
  distributed storage,'' in \emph{2017 IEEE International Conference on
  Communications (ICC)}, May 2017.

\bibitem{sohn2017TIT}
\BIBentryALTinterwordspacing
J.~Sohn, B.~Choi, S.~W. Yoon, and J.~Moon, ``Capacity of clustered distributed
  storage,'' \emph{CoRR}, vol. abs/1710.02821, 2017. [Online]. Available:
  \url{http://arxiv.org/abs/1710.02821}
\BIBentrySTDinterwordspacing

\bibitem{prakash2017storage}
N.~Prakash, V.~Abdrashitov, and M.~M{\'e}dard, ``The storage vs
  repair-bandwidth trade-off for clustered storage systems,'' \emph{arXiv
  preprint arXiv:1701.04909}, 2017.

\bibitem{hu2017optimal}
Y.~Hu, X.~Li, M.~Zhang, P.~P. Lee, X.~Zhang, P.~Zhou, and D.~Feng, ``Optimal
  repair layering for erasure-coded data centers: From theory to practice,''
  \emph{arXiv preprint arXiv:1704.03696}, 2017.

\bibitem{papailiopoulos2014locally}
D.~S. Papailiopoulos and A.~G. Dimakis, ``Locally repairable codes,''
  \emph{IEEE Transactions on Information Theory}, vol.~60, no.~10, pp.
  5843--5855, 2014.

\bibitem{tamo2016optimal}
I.~Tamo, D.~S. Papailiopoulos, and A.~G. Dimakis, ``Optimal locally repairable
  codes and connections to matroid theory,'' \emph{IEEE Transactions on
  Information Theory}, vol.~62, no.~12, pp. 6661--6671, 2016.

\bibitem{tebbi2014code}
M.~A. Tebbi, T.~H. Chan, and C.~W. Sung, ``A code design framework for
  multi-rack distributed storage,'' in \emph{Information Theory Workshop (ITW),
  2014 IEEE}.\hskip 1em plus 0.5em minus 0.4em\relax IEEE, 2014, pp. 55--59.

\bibitem{sahraei2017increasing}
S.~Sahraei and M.~Gastpar, ``Increasing availability in distributed storage
  systems via clustering,'' \emph{arXiv preprint arXiv:1710.02653}, 2017.

\bibitem{10.2307/j.ctt7t833}
\BIBentryALTinterwordspacing
D.~S. Bernstein, \emph{Matrix Mathematics: Theory, Facts, and Formulas (Second
  Edition)}.\hskip 1em plus 0.5em minus 0.4em\relax Princeton University Press,
  2009. [Online]. Available: \url{http://www.jstor.org/stable/j.ctt7t833}
\BIBentrySTDinterwordspacing

\bibitem{shah2010explicit}
N.~B. Shah, K.~V. Rashmi, P.~V. Kumar, and K.~Ramchandran, ``Explicit codes
  minimizing repair bandwidth for distributed storage,'' in \emph{Information
  Theory (ITW 2010, Cairo), 2010 IEEE Information Theory Workshop on}.\hskip
  1em plus 0.5em minus 0.4em\relax IEEE, 2010, pp. 1--5.

\bibitem{suh2011exact}
C.~Suh and K.~Ramchandran, ``Exact-repair mds code construction using
  interference alignment,'' \emph{IEEE Transactions on Information Theory},
  vol.~57, no.~3, pp. 1425--1442, 2011.

\end{thebibliography}






\IEEEtriggeratref{3}

\end{document}